\newtheorem{definition}{Definition} % [section]
\newtheorem{example}{Example} % [section]
\newtheorem{lemma}{Lemma}
\newtheorem{theorem}{Theorem}
\newtheorem{corollary}{Corollary}
\newenvironment{oneshot}[1]{\@begintheorem{#1}{\unskip}}{\@endtheorem}
\DeclareFixedFootnote{\repnote}{See the Appendix for the relevant definitions.}
\newcommand{\hide}[1]{}
\newcommand*{\pnot}{\mathord{\sim}}
\newcommand{\mo}[1]{\llbracket#1\rrbracket}
\newcommand{\mwrs}[3]{\llbracket#1\rrbracket_{#2, #3}}
\newcommand{\vgr}[2]{#2(#1)}
\newcommand{\vwrt}[2]{v_{#2}(#1)}
\newcommand{\wfmp}{\mathcal{M}_\mathsf{P}}
\newcommand{\wgp}{M_\mathsf{Gr(P)}}
\newcommand{\pmp}{\mathcal{N}_\mathsf{P}}
\newcommand{\pgp}{N_\mathsf{Gr(P)}}
\newcommand{\tleq}[1][]{\leq_{#1}}
\newcommand{\fleq}[1][]{\preceq_{#1}}
\newcommand{\exeq}[1][]{\cong_{#1}}
\newcommand{\unk}{0}
\begin{document}
\bibliographystyle{acmtrans}

\long\def\comment#1{}

\title[The Intricacies of Three-Valued Extensional Semantics for Higher-Order LP]{The Intricacies of Three-Valued Extensional Semantics for Higher-Order Logic Programs}

  \author[P. Rondogiannis and I. Symeonidou]
		 {PANOS RONDOGIANNIS and IOANNA SYMEONIDOU \\
		 National and Kapodistrian University of Athens, Athens, Greece \\
		 \email{\{prondo,i.symeonidou\}@di.uoa.gr}}

\pagerange{\pageref{firstpage}--\pageref{lastpage}}
\volume{\textbf{10} (3):}
\jdate{March 2002}
\setcounter{page}{1}
\pubyear{2002}

\maketitle

%%%%%%%%%%%%%%%%%%%%%%%%%%%%%%%%%%%%%%%%%%%%%%%%%%%%%%%%%%%%%
%%%%%%%%%%%%%%%%%%%%%%%%%%%%%%%%%%%%%%%%%%%%%%%%%%%%%%%%%%%%%
%%%%%%%%%%%%%%%%%%%%%%%%%%%%%%%%%%%%%%%%%%%%%%%%%%%%%%%%%%%%%ABSTRACT
\begin{abstract}
In~\cite{Bezem99,Bezem01}, M. Bezem defined an extensional semantics for
positive higher-order logic programs. Recently, it was demonstrated in~\cite{RS16}
that Bezem's technique can be extended to higher-order logic programs with negation,
retaining its extensional properties, provided that it is interpreted under
a logic with an infinite number of truth values. In~\cite{RS17} it was also
demonstrated that Bezem's technique, when extended under the stable model
semantics, does not in general lead to extensional stable models. In this paper
we consider the problem of extending Bezem's technique under the well-founded semantics.
We demonstrate that the well-founded extension {\em fails} to retain extensionality
in the general case. On the positive side, we demonstrate that for stratified
higher-order logic programs, extensionality is indeed achieved. We analyze the
reasons of the failure of extensionality in the general case, arguing that a
three-valued setting can not distinguish between certain predicates that appear
to have a different behaviour inside a program context, but which happen to be
identical as three-valued relations. The paper is under consideration for acceptance in TPLP.
\end{abstract}

\begin{keywords}
Extensional Higher-Order Logic Programming, Negation in Logic Programming.
\end{keywords}

%%%%%%%%%%%%%%%%%%%%%%%%%%%%%%%%%%%%%%%%%%%%%%%%%%%%%%%%%%%%%
%%%%%%%%%%%%%%%%%%%%%%%%%%%%%%%%%%%%%%%%%%%%%%%%%%%%%%%%%%%%%
%%%%%%%%%%%%%%%%%%%%%%%%%%%%%%%%%%%%%%%%%%%%%%%%%%%%%%%%%%%%%INTRODUCTION
\section{Introduction}
Recent research~\cite{Wa91a,Bezem99,Bezem01,KRW05,CharalambidisHRW13,RS16,ChRS17} has
investigated the possibility of providing {\em extensional} semantics to higher-order
logic programming. Under an extensional semantics, predicates denote sets, and therefore
one can use standard set theory in order to understand programs and reason
about them. Of course, extensionality comes with a price: to obtain an extensional
semantics, one usually has to consider higher-order logic programs with a
relatively restricted syntax. Actually this is a main difference between the extensional
and the more traditional {\em intensional} approaches to  higher-order logic programming
such as~\cite{MN2012,CKW93-187}: the latter languages have a richer syntax and expressive capabilities
but they are not usually amenable to a standard set-theoretic semantics.

There exist two main research directions for providing extensional semantics to higher-order
logic programs. The first one~\cite{Wa91a,KRW05,CharalambidisHRW13,CharalambidisER14} has
been developed using domain-theoretic tools, and resembles the techniques
for assigning denotational semantics to functional languages.
The second approach~\cite{Bezem99,Bezem01,RS16} relies on the syntactic entities that
exist in a program, and is based on processing the ground instantiation of the program.
The two research directions are not unrelated: it has recently been
shown~\cite{ChRS17} that for a broad class of positive programs, the two approaches
coincide with respect to ground atoms.

In this paper we focus exclusively on the second extensional approach. This approach was
initially proposed in~\cite{Bezem99,Bezem01} for {\em positive} (i.e., negationless) higher-order
logic programs. Recently, it was demonstrated in~\cite{RS16} that by combining the technique
of~\cite{Bezem99,Bezem01} with the infinite-valued semantics of~\cite{RondogiannisW05},
we obtain an extensional semantics for higher-order logic programs
with negation. In the extended version~\cite{RS17} of the paper~\cite{RS16} a negative and
unexpected result is established: by combining the technique of~\cite{Bezem99,Bezem01}
with the stable model semantics~\cite{GL88}, we get a semantics that is not necessarily
extensional! It remained as an open problem of~\cite{RS16} whether the combination
of the technique of~\cite{Bezem99,Bezem01} with the well-founded approach~\cite{GelderRS91}
leads to an extensional semantics. It is exactly this problem that we undertake to
solve in the present paper.

We demonstrate that the well-founded extension of Bezem's technique {\em fails} to
retain extensionality in the general case. On the positive side, we prove that
for stratified higher-order logic programs, extensionality is indeed achieved. We analyze
the reasons of the failure of extensionality in the general case, and claim that this is
not an inherent shortcoming of Bezem's approach but a more general phenomenon.  In particular
we argue that restricting attention to three-valued logic appears to ``throw away too much
information'' and makes predicates that are expected to have different behaviours,
appear as identical three-valued relations. The main contributions of the present
paper can be summarized as follows:
\begin{itemize}
\item We demonstrate that the well-founded adaptation of Bezem's technique,
      does not in general lead to an extensional model. In particular, we exhibit
      a program with a non-extensional well-founded model. This result, despite
      its negative flavor, indicates that the addition of negation to
      higher-order logic programming is not such a straightforward task as it
      was possibly initially anticipated. Notice that, as it was recently
      demonstrated in~\cite{RS17}, the stable model adaptation of Bezem's
      technique is also non-extensional in general.

\item Despite the above negative result, we prove that the
      well-founded adaptation of Bezem's technique gives an extensional
      two-valued model in the case of {\em stratified} programs.
      This result affirms the importance and the well-behaved nature of stratified
      programs, which was, until now, only known for the first-order case.

\item We study the more general question of the possible existence of an
      {\em alternative} extensional three-valued semantics
      for higher-order logic programs with negation. We indicate that in order to achieve
      such a semantics, one has to make some (arguably) non-standard assumptions
      regarding the behaviour of negation.
\end{itemize}

The rest of the paper is organized as follows. Section~\ref{extensionality} introduces the
basic notions and the advantages of the extensional approach to the semantics of higher-order
logic programming. Section~\ref{intuitive} presents in an intuitive way the main
concepts and results of the paper. Section~\ref{syntax_of_language} introduces
the syntax and Section~\ref{semantics_of_language} the semantics of our source language.
Section~\ref{well_founded} demonstrates that Bezem's approach is not extensional under
the well-founded semantics. In Section~\ref{sec:ext_mon} it is established that stratified
programs have an extensional, two-valued, well-founded semantics. Section~\ref{restrictions}
concludes by discussing the restrictions that any reasonable three-valued semantics would have with respect
to extensionality. The proofs of all results are given in the corresponding appendices.

\section{Extensional Higher-Order Logic Programming}\label{extensionality}
In~\cite{Wa91a} W. W. Wadge suggested that if we appropriately restrict the
syntax of higher-order logic programming, then we can obtain
languages that can be assigned a standard denotational semantics in which
predicates denote sets. In other words, for such syntactically restricted
languages one can apply traditional domain-theoretic notions and tools that
have been used extensively in higher-order functional programming.
The most crucial syntactic restriction imposed by Wadge (and also later
independently by M. Bezem~\cite{Bezem99}), is the following:

\vspace{0.16cm}
\noindent
{\bf The extensionality syntactic restriction:} {\em In the head of every rule in a program,
each argument of predicate type must be a variable, and all such variables must be distinct.}

\begin{example}
The following is a legitimate program that defines the union of two relations {\tt P}, {\tt Q}
(for the moment we use ad-hoc Prolog-like syntax):
\[
\begin{array}{l}
\mbox{\tt union(P,Q)(X):-P(X).}\\
\mbox{\tt union(P,Q)(X):-Q(X).}
\end{array}
\]
However, the following program does not satisfy Wadge's restriction:
\[
\begin{array}{l}
\mbox{\tt q(a).}\\
\mbox{\tt r(q).}
\end{array}
\]
because the predicate constant {\tt q} appears as an argument in the head of a rule. Similarly,
the program:
\[
\begin{array}{l}
\mbox{\tt p(Q,Q):-Q(a).}
\end{array}
\]
is problematic because the predicate variable {\tt Q} is used twice in the
head of the rule.\mathproofbox
\end{example}

The advantages of extensionality were identified by Wadge and Bezem in their respective
papers. First of all, under the extensional approach, program predicates can be understood
declaratively in terms of extensional notions. For example, the program:
\[
\begin{array}{l}
\mbox{\tt map(R,[],[]).}\\
\mbox{\tt map(R,[H1|T1],[H2|T2]):-R(H1,H2), map(R,T1,T2).}
\end{array}
\]
can be understood in a similar way as the well-known {\tt map} function of Haskell.
Moreover, since under the extensional approach predicates denote sets, two predicates
that are true of the same arguments, are considered indistinguishable. So, for example,
if we define two sorting predicates {\tt merge\_sort} and {\tt quick\_sort} that have
the same type, say $\tau$, and that perform the same task (possibly with different efficiency),
%it is {\em guaranteed} that if we have a third predicate {\tt p} that takes as a parameter an argument of type $\tau$, then {\tt p(merge\_sort)} and {\tt p(quick\_sort)} will always return the same result.
it is {\em guaranteed} that any predicate which operates on relations of type $\tau$ will have the same behaviour whether it is given {\tt merge\_sort} or {\tt quick\_sort} as an argument.
 As mentioned in~\cite{Wa91a} ``extensionality means exactly
that predicates are used as {\em black boxes} - and the ``black box'' concept is central
to all kinds of engineering''. It is this property that makes extensional languages so
appealing (and is actually one of the greatest assets of traditional functional programming).

Another important advantage of this declarative approach to higher-order logic programming
is that many techniques and ideas that have been successfully developed in the functional
programming world (such as program transformations, optimizations, techniques for proving
program correctness, and so on), could be transferred to the higher-order logic programming 
domain, opening in this way promising new research directions for logic programming as a whole.

%%%%%%%%%%%%%%%%%%%%%%%%%%%%%%%%%%%%%%%%%%%%%%%%%%%%%%%%%%%%%
%%%%%%%%%%%%%%%%%%%%%%%%%%%%%%%%%%%%%%%%%%%%%%%%%%%%%%%%%%%%%
%%%%%%%%%%%%%%%%%%%%%%%%%%%%%%%%%%%%%%%%%%%%%%%%%%%%%%%%%%%%%INTUITION
\section{An Intuitive Overview of the Proposed Approach}\label{intuitive}
In this paper we consider the semantic technique for positive higher-order logic programs proposed
in~\cite{Bezem99,Bezem01} and we investigate whether it can be applied in order to provide an extensional
well-founded semantics for higher-order logic programs with negation in clause bodies. In this section
we give an intuitive description of Bezem's idea and we outline how we use it when negation is added
to programs.

Given a positive higher-order logic program, the starting idea behind Bezem's approach
is to take its ``ground instantiation'', in which we replace variables with well-typed
terms that can be created using syntactic entities that appear in the program.
For example, consider the higher-order program below:
\[
\begin{array}{l}
\mbox{\tt q(a).}\\
\mbox{\tt q(b).}\\
\mbox{\tt p(Q):-Q(a).}\\
\mbox{\tt id(R)(X):-R(X).}
\end{array}
\]
In order to obtain the ground instantiation of this program, we consider each clause
and replace each variable of the clause with a ground term that has the same type
as the variable under consideration (the formal definition of this procedure will be given in
Definition~\ref{ground_instantiation_definition}). In this way we obtain the following
infinite program:
\[
\begin{array}{l}
\mbox{\tt q(a).}\\
\mbox{\tt q(b).}\\
\mbox{\tt p(q):-q(a).}\\
\mbox{\tt id(q)(a):-q(a).}\\
\mbox{\tt id(q)(b):-q(b).}\\
\mbox{\tt p(id(q)):-id(q)(a).}\\
%\mbox{\tt id(id(q))(a):-id(q)(a).}\\
%\mbox{\tt id(id(q))(b):-id(q)(b).}\\
%\mbox{\tt p(id(id(q))):-id(id(q))(a).}\\
\hspace{1.5cm} \cdots
\end{array}
\]
One can now treat the new program as an infinite propositional one (i.e., each ground atom can be seen
as a propositional variable). This implies that we can use the standard least fixed-point construction
of classical logic programming (see for example~\cite{lloyd}) in order to compute the set of atoms
that should be taken as ``true''.  In our example, the least fixed-point will
contain atoms such as {\tt q(a)}, {\tt q(b)}, {\tt p(q)}, {\tt id(q)(a)}, {\tt id(q)(b)}, {\tt p(id(q))},
and so on.

Bezem demonstrated that the least fixed-point semantics of the ground instantiation of every positive
higher-order logic program of the language considered in~\cite{Bezem99,Bezem01}, is {\em extensional}
in a sense that can be explained as follows. In our example, {\tt q} and {\tt id(q)}
are equal since they are both true of exactly the constants {\tt a} and {\tt b}.
Therefore, we expect that (for example) if {\tt p(q)} is true then {\tt p(id(q))}
is also true, because {\tt q} and {\tt id(q)} should be considered as indistinguishable.
This property of ``indistinguishability'' is formally defined in~\cite{Bezem99,Bezem01} and it is
demonstrated that it holds in the least fixed-point of the immediate consequence operator
of the ground instantiation of every program that abides to the simple extensionality
syntactic restriction given in the previous section (and formally described by Definition \ref{def:programs}
later in the paper).

%by a simple syntactic restriction: in the head of every rule, each argument of predicate type must be a variable and all such
%variables must be distinct (see Definition \ref{def:programs}). Conversely, this may not be true for a program that
%contains a clause such as {\tt r(p) :- p(a).} or {\tt equal(R, R).} It is worth noting that, despite the different nature
%of the approach taken in \cite{Wa91a}, the same restriction is also imposed in order to achieve an extensional semantics.

The key idea behind extending Bezem's semantics in order to apply to higher-order logic programs
with negation, is straightforward to state: given such a program, we first take its ground instantiation.
The resulting program is a (possibly infinite) propositional program with negation
and therefore we can compute its semantics in any standard way that exists for obtaining
the meaning of such programs. For example, one could use the well-founded semantics~\cite{GelderRS91},
the stable model semantics~\cite{GL88}, or the infinite-valued semantics~\cite{RondogiannisW05},
and then proceed to examine whether the well-founded model (respectively, each stable model, or the
minimum infinite-valued model) is extensional in the sense of~\cite{Bezem99,Bezem01} (informally
described above).

%As we recently demonstrated~\cite{RS16}, Bezem's technique, when applied under the
%infinite-valued semantics, is indeed extensional. However, in the extended version~\cite{RS17}
%of~\cite{RS16}, we also demonstrated that applying Bezem's technique under the stable model semantics,
%{\em does not} in general lead to extensional stable models.

An open problem posed in~\cite{RS17} was
whether Bezem's technique, under the well-founded semantics, always leads to
an extensional well-founded model. As we are going to see in the subsequent sections,
this is not the case. In particular, we exhibit a program containing three
predicates {\tt s}, {\tt p} and {\tt q}, such that {\tt p} and {\tt q} are extensionally
equal under the well-founded semantics, but {\tt s(p)} and {\tt s(q)}
have a different truth value. On the positive side,
we prove that every stratified higher-order logic program with negation has an extensional
well-founded model. In this sense, we identify a broad class of programs
that are well-behaved in terms of extensionality.

%We analyze the reasons of why the well-founded and the stable model semantics
%fail to provide extensionality in the general case, while the infinite-valued succeeds.
%We argue that three-valued approaches are not sufficiently fine-grained
%so as to capture some subtle differences between predicates. In the proposed counterexample,
%the predicates {\tt p} and {\tt q} have deep dissimilarities which can not be revealed
%in a three-valued setting. This is the main reason for which
%{\tt s(p)} has a different truth value than {\tt s(q)}. On the other hand, the
%infinite-valued semantics, due to its finest structure, can reveal these differences.
%In particular, under the infinite-valued approach, {\tt p} and {\tt q} are not
%extensionally equal, and therefore the different truth values of {\tt s(p)}
%and {\tt s(q)} are justified.

\section{The Syntax of \texorpdfstring{$\mathcal{H}$}{H}}\label{syntax_of_language}
In this section we define the syntax of the language $\mathcal{H}$ that we use throughout the paper.
$\mathcal{H}$ is based on a simple type system with two base types: $o$, the boolean domain, and $\iota$, the domain of data objects.
The composite types are partitioned into three classes: functional (assigned to function symbols), predicate (assigned to predicate symbols)
and argument (assigned to parameters of predicates).
\begin{definition}
A type can either be \emph{functional}, \emph{predicate}, or \emph{argument}, denoted by $\sigma$, $\pi$
and $\rho$ respectively and defined as:
\begin{align*}
\sigma & := \iota \mid (\iota \rightarrow \sigma) \\
\pi & := o \mid (\rho \rightarrow \pi) \\
\rho & := \iota \mid \pi
\end{align*}
\end{definition}

We will use $\tau$ to denote an arbitrary type (either functional, predicate, or argument). As usual, the binary operator $\rightarrow$ is right-associative. A functional type that is different than $\iota$ will often be written in the form $\iota^n \rightarrow \iota$, $n\geq 1$. Moreover, it can be easily seen that every predicate type $\pi$ can be written in the form $\rho_1 \rightarrow \cdots \rightarrow \rho_n \rightarrow o$, $n\geq 0$ (for $n=0$ we assume that $\pi=o$).  We proceed by defining the syntax of $\mathcal{H}$:
\begin{definition}
The \emph{alphabet} of $\mathcal{H}$ consists of the following: {\em predicate variables}
of every predicate type $\pi$ (denoted by capital letters such as $\mathsf{Q,R,}$ $\mathsf{S,\ldots}$);
{\em individual variables} of type $\iota$ (denoted by capital letters such as $\mathsf{X,Y,Z,\ldots}$);
{\em predicate constants} of every predicate type $\pi$ (denoted by lowercase letters such as $\mathsf{p,q,r,\ldots}$);
{\em individual constants} of type $\iota$ (denoted by lowercase letters such as $\mathsf{a,b,c,\ldots}$);
{\em function symbols} of every functional type $\sigma \neq \iota$ (denoted by lowercase letters such as $\mathsf{f,g,h,\ldots}$);
the {\em inverse implication} constant $\leftarrow$; the {\em negation} constant $\pnot$; the comma;
the left and right parentheses; and the {\em equality} constant $\approx$ for comparing terms of type $\iota$.
\end{definition}

%\begin{definition}
%The \emph{alphabet} of $\mathcal{H}$ consists of the following:
%\begin{enumerate}
%  \item Predicate variables of every predicate type $\pi$ (denoted by capital letters such as
%      $\mathsf{Q,R,}$ $\mathsf{S,\ldots}$).
%  \item Individual variables of type $\iota$ (denoted by capital letters such as
%      $\mathsf{X,Y,Z,\ldots}$).
%  \item Predicate constants of every predicate type $\pi$ (denoted by lowercase letters such as $\mathsf{p,q,r,\ldots}$).
%  \item Individual constants of type $\iota$ (denoted by lowercase
%      letters such as $\mathsf{a,b,c,\ldots}$).
%  \item Function symbols of every functional type $\sigma \neq \iota$ (denoted by lowercase letters such as $\mathsf{f,g,h,\ldots}$).
%  \item The inverse implication constant $\leftarrow$, the negation constant $\pnot$, the comma, the left and right parentheses, and the equality constant $\approx$ %for comparing terms of type $\iota$.
%\end{enumerate}
%\end{definition}

Arbitrary variables will be usually denoted by $\mathsf{V}$ and its subscripted versions.
\begin{definition}
The set of {\em terms} of $\mathcal{H}$ is defined as follows: every predicate variable (respectively,
predicate constant) of type $\pi$ is a term of type $\pi$; every individual variable (respectively,
individual constant) of type $\iota$ is a term of type $\iota$; if $\mathsf{f}$ is an $n$-ary function
symbol and $\mathsf{E}_1, \ldots, \mathsf{E}_n$ are terms of type $\iota$ then
$(\mathsf{f}\ \mathsf{E}_1\cdots\mathsf{E}_n)$ is a term of type $\iota$; if $\mathsf{E}_1$ is a term
of type $\rho \rightarrow \pi$ and $\mathsf{E}_2$ a term of type $\rho$ then $(\mathsf{E}_1\ \mathsf{E}_2)$
is a term of type $\pi$.
\end{definition}

%\begin{definition}
%The set of {\em terms} of $\mathcal{H}$ is defined as follows:
%\begin{itemize}
%  \item Every predicate variable (respectively, predicate constant) of type $\pi$ is a
%        term of type $\pi$; every individual variable (respectively, individual constant)
%        of type $\iota$ is a term of type $\iota$;
%  \item if $\mathsf{f}$ is an $n$-ary function symbol and $\mathsf{E}_1, \ldots, \mathsf{E}_n$
%        are terms of type $\iota$ then $(\mathsf{f}\ \mathsf{E}_1\cdots\mathsf{E}_n)$ is
%        a term of type $\iota$;
%  \item if $\mathsf{E}_1$ is a term of type $\rho \rightarrow \pi$ and
%        $\mathsf{E}_2$ a term of type $\rho$ then $(\mathsf{E}_1\ \mathsf{E}_2)$ is a term of type $\pi$.
%\end{itemize}
%\end{definition}
%
\begin{definition}
\label{def:expressions}
The set of {\em expressions} of $\mathcal{H}$ is defined as follows: a term of type $\rho$ is an expression of type $\rho$;
if $\mathsf{E}$ is a term of type $o$ then $(\pnot \mathsf{E})$ is an expression of type $o$;
if $\mathsf{E}_1$ and $\mathsf{E}_2$ are terms of type $\iota$, then $(\mathsf{E}_1\approx \mathsf{E}_2)$ is an expression of type $o$.
\end{definition}
%

%\begin{definition}
%\label{def:expressions}
%The set of {\em expressions} of $\mathcal{H}$ is defined as follows:
%\begin{itemize}
%\item A term of type $\rho$ is an expression of type $\rho$;
%\item if $\mathsf{E}$ is a term of type $o$ then $(\pnot \mathsf{E})$ is an expression of type $o$;
%\item if $\mathsf{E}_1$ and $\mathsf{E}_2$ are terms of type $\iota$, then $(\mathsf{E}_1\approx \mathsf{E}_2)$ is an expression of type $o$.
%\end{itemize}
%\end{definition}
%

We will omit parentheses
when no confusion arises. To denote that an expression $\mathsf{E}$ has type $\rho$ we will often write $\mathsf{E}:\rho$. We will write $vars(\mathsf{E})$ to denote the set of all the variables in $\mathsf{E}$.
Expressions (respectively, terms) that have no variables will be referred to as {\em ground expressions} (respectively, {\em ground terms}).
Terms of type $o$ will be referred to as {\em atoms}. Expressions of type $o$ that do not contain negation, i.e. expressions of the form $(\mathsf{E}_1\approx \mathsf{E}_2)$ or atoms, will be called {\em positive literals}, while expressions of the form $(\pnot \mathsf{E})$ will be called {\em negative literals}. A {\em literal} is either a positive literal or a negative literal.
\begin{definition}\label{def:programs}
A {\em clause} of $\mathcal{H}$ is a formula
$\mathsf{p}\ \mathsf{V}_1 \cdots \mathsf{V}_n \leftarrow \mathsf{L}_1, \ldots, \mathsf{L}_m$,
where $\mathsf{p}$ is a predicate constant of type $\rho_1 \rightarrow \cdots \rightarrow\rho_n \rightarrow o$, $\mathsf{V}_1,\ldots,\mathsf{V}_n$ are distinct variables of types $\rho_1,\ldots,\rho_n$ respectively and $\mathsf{L}_1,\ldots,\mathsf{L}_m$ are literals. The term $\mathsf{p}\ \mathsf{V}_1 \cdots \mathsf{V}_n$ is called the {\em head} of the clause, the variables $\mathsf{V}_1, \ldots, \mathsf{V}_n$ are the {\em formal parameters} of the
clause
%, the variables $\mathsf{V}_1, \ldots, \mathsf{V}_n$ are the {\em formal parameters} of the clause
and the conjunction $\mathsf{L}_1,\ldots, \mathsf{L}_m$ is its {\em body}.
A {\em program} $\mathsf{P}$ of $\mathcal{H}$ is a finite set of clauses.
\end{definition}
\begin{example}
The program below defines the {\tt subset} relation over unary predicates:
\[
\begin{array}{l}
\mbox{\tt subset S1 S2 $\leftarrow$ $\pnot$(nonsubset S1 S2)}\\
\mbox{\tt nonsubset S1 S2 $\leftarrow$ (S1 X), $\pnot$(S2 X)}
\end{array}
\]
Given unary predicates {\tt p} and {\tt q}, {\tt subset p q} is true iff {\tt p} is a subset of {\tt q}.\mathproofbox
\end{example}
\begin{example}
\label{exm:prefernce}
For a more ``real-life'' higher-order logic program with negation, assume that we have a unary
predicate {\tt movie M} and a binary predicate {\tt ranking M R} which returns the ranking
{\tt R} of a given movie {\tt M}. Consider also the following first-order predicate that
defines a preference over movies based on their ranking:
\[
\begin{array}{l}
\mbox{\tt prefer M1 M2 $\leftarrow$ movie M1, movie M2, ranking M1 R1, ranking M2 R2, R1>R2.}
\end{array}
\]
The following higher-order predicate {\tt winnow} (see for example~\cite{Chomicki03}) can be used
to select all the ``best'' tuples {\tt T} out of a given relation {\tt R} based on a preference
relation {\tt P}:
\[
\begin{array}{l}
\mbox{\tt winnow P R T $\leftarrow$ R T, $\pnot$(bypassed P R T).}\\
\mbox{\tt bypassed P R T $\leftarrow$ R T1, P T1 T.}
\end{array}
\]
Intuitively, {\tt winnow} returns all the tuples {\tt T} of the relation {\tt R} such that
there does not exist any tuple {\tt T1} in the relation {\tt R} that is better from {\tt T}
with respect to the preference relation {\tt P}. For example, if we ask the query
{\tt ?- winnow prefer movie T.} we expect as answers all those movies that have the highest
possible ranking. Notice that since {\tt winnow} is a higher-order predicate, it can be invoked
with different arguments; for example, it can be used to select out of a {\tt book} relation,
all those books that have the lowest possible price, or out of a {\tt flight} relation all those
flights that go to London, and so on.\mathproofbox
\end{example}

%In certain examples in subsequent sections, instead of using the ``curried form'' of application (like
%{\tt (S1 X)} above), for readability reasons we will use the more ``traditional'' parenthesized form (i.e., {\tt S1(X)}).

The ground instantiation of a program is described by the following definitions:
\begin{definition}
A {\em substitution} $\theta$ is a finite set of the form $\{ \mathsf{V}_1/\mathsf{E}_1, \ldots, \mathsf{V}_n/\mathsf{E}_n\}$
where the $\mathsf{V}_i$'s are different variables and each $\mathsf{E}_i$
is a term having the same type as $\mathsf{V}_i$. We write
$dom(\theta)$ to denote the domain $\{ \mathsf{V}_1, \ldots, \mathsf{V}_n\}$ of $\theta$. If
all the terms $\mathsf{E}_1, \ldots, \mathsf{E}_n$ are ground, $\theta$ is called a {\em ground substitution}.
\end{definition}
\begin{definition}
\label{def:ground_inst}
Let $\theta$ be a substitution and $\mathsf{E}$ be an expression. Then, $\mathsf{E}\theta$
is an expression obtained from $\mathsf{E}$ as follows:
\begin{itemize}
  \item $\mathsf{E}\theta = \mathsf{E}$ if $\mathsf{E}$ is a predicate constant or individual constant;
  \item $\mathsf{V}\theta = \theta(\mathsf{V})$ if $\mathsf{V} \in dom(\theta)$; otherwise, $\mathsf{V}\theta = \mathsf{V}$;
  \item $(\mathsf{f}\ \mathsf{E}_1\cdots\mathsf{E}_n)\theta = (\mathsf{f}\ \mathsf{E}_1\theta\cdots\mathsf{E}_n\theta)$;
  \item $(\mathsf{E}_1\ \mathsf{E}_2)\theta = (\mathsf{E}_1\theta\ \mathsf{E}_2\theta)$;
  \item $(\pnot \mathsf{E})\theta= (\pnot \mathsf{E}\theta)$;
  \item $(\mathsf{E}_1\approx \mathsf{E}_2)\theta = (\mathsf{E}_1\theta\approx \mathsf{E}_2\theta)$.
\end{itemize}
If $\theta$ is a ground substitution such that $vars(\mathsf{E}) \subseteq dom(\theta)$, then
the ground expression $\mathsf{E}\theta$ is called a {\em ground instance} of $\mathsf{E}$.
\end{definition}
\begin{definition}\label{ground_instantiation_definition}
Let $\mathsf{P}$ be a program. A {\em ground instance of a clause}
$\mathsf{p}\ \mathsf{V}_1 \cdots \mathsf{V}_n \leftarrow \mathsf{L}_1,\ldots,\mathsf{L}_m$
of $\mathsf{P}$ is a formula  $(\mathsf{p}\ \mathsf{V}_1 \cdots \mathsf{V}_n)\theta \leftarrow \mathsf{L}_1\theta,\ldots,\mathsf{L}_m\theta$, where $\theta$ is a ground substitution whose domain
is the set of all variables that appear in the clause, such that for every $\mathsf{V} \in dom(\theta)$ with $\mathsf{V}:\rho$,
$\theta(\mathsf{V})$ is a ground term of type $\rho$ that has been formed with predicate constants, function symbols,
and individual constants that appear in $\mathsf{P}$. The {\em ground instantiation of a program} $\mathsf{P}$,
denoted by $\mathsf{Gr(P)}$, is the (possibly infinite) set that contains all the ground instances of the
clauses of $\mathsf{P}$.
\end{definition}
%
%Notice that in the body of a clause of the ground instantiation, there may exist ground
%expressions of the form $(\mathsf{E}_1\approx \mathsf{E}_2)$. In the case where the two
%expressions $\mathsf{E}_1$ and $\mathsf{E}_2$ are syntactically identical, the expression
%$(\mathsf{E}_1\approx \mathsf{E}_2)$ will be treated as the constant $\mathsf{true}$, and
%otherwise as the constant $\mathsf{false}$.
%%

\section{The Semantics of \texorpdfstring{$\mathcal{H}$}{H}}\label{semantics_of_language}
In~\cite{Bezem99,Bezem01} M. Bezem developed a semantics for higher-order logic programs which
generalizes the familiar Herbrand-model semantics of classical (first-order) logic programs.
In this section we extend Bezem's semantics to the case of higher-order logic programs with negation.
% Our developments extend the semantics of~\cite{Bezem99,Bezem01}
% for positive higher-order logic programs, to programs with negation.
%We provide a semantic framework that is not as general as that of~\cite{Bezem99,Bezem01}, but is still general enough to allow
%the extension to the higher-order case of every semantics that is applicable to infinite propositional programs. This includes
%approaches like the well-founded and stable model semantics.

In order to interpret the programs of  $\mathcal{H}$, we need to specify the semantic domains in which the expressions of each type $\tau$ are assigned their meanings. The following definition is a slightly modified version of the corresponding definition of~\cite{Bezem99,Bezem01} and it implies that the expressions of predicate types should be understood as representing functions. We use $[S_1 \rightarrow S_2]$ to denote the set of (possibly partial) functions from a set $S_1$ to a set $S_2$. The possibility to have a partial function arises due to a technicality
which is explained in the remark just above Definition~\ref{HInterp}.
\begin{definition}
\label{def:type_structure}
A \emph{functional type structure} $\mathcal{S}$ for $\mathcal{H}$ consists of two non-empty sets $D$ and $A$ together with an assignment
$\mo{\tau}$ to each type $\tau$ of $\mathcal{H}$, so that the following are satisfied:

\begin{itemize}
\item $\mo{\iota} = D$;
\item $\mo{\iota^n \rightarrow \iota} = D^n \rightarrow D$;
\item $\mo{o} = A$;
\item $\mo{\rho \rightarrow \pi} \subseteq [\mo{\rho} \rightarrow \mo{\pi}]$.
\end{itemize}
%We will refer to $D$ as the \emph{domain of individuals} of $\mathcal{S}$ and to $A$ as the \emph{domain of atoms} of $\mathcal{S}$.
\end{definition}

Given a functional type structure $\mathcal{S}$, any function $v:\mo{o}\rightarrow \{\mathit{false}, \unk, \mathit{true}\}$ will be called a \emph{three-valued valuation function} (or simply \emph{valuation function}) for $\mathcal{S}$. We will use the term \emph{two-valued valuation functions} to distinguish the subset of valuation functions which do not assign the value $\unk$ to any element of $\mo{o}$, i.e. the functions $v:\mo{o}\rightarrow \{\mathit{false}, \mathit{true}\}$.

It is customary in the study of the semantics of logic programming languages to restrict attention to
{\em Herbrand interpretations}. Given a program $\mathsf{P}$, a Hebrand interpretation is one that
has as its underlying universe the so-called {\em Herbrand universe} of $\mathsf{P}$:
\begin{definition}\label{HUniverse}
For a program $\mathsf{P}$, we define the \emph{Herbrand universe} for every argument type $\rho$, denoted by
$U_{\mathsf{P},\rho}$ to be the set of all ground terms of type $\rho$ that can be formed out of the individual constants, function symbols, and predicate constants in the program. Moreover, we define $U^+_{\mathsf{P},o}$ to be the set of all ground expressions of type $o$, that can be formed out of the above symbols, i.e. the set $U^+_{\mathsf{P},o}=U_{\mathsf{P},o} \cup \{(\mathsf{E}_1\approx \mathsf{E}_2)\mid\mathsf{E}_1, \mathsf{E}_2\in U_{\mathsf{P},\iota}\} \cup \{(\pnot \mathsf{E})\mid\mathsf{E}\in U_{\mathsf{P},o}\}$.
\end{definition}

Following~\cite{Bezem99,Bezem01}, we take $D$ and $A$ in Definition \ref{def:type_structure} to be equal to $U_{\mathsf{P},\iota}$ and $U^+_{\mathsf{P},o}$ respectively. Then, for each predicate type $\rho \rightarrow \pi$, each element of $U_{\mathsf{P}, \rho \rightarrow \pi}$ can be perceived as a function mapping elements of $\mo{\rho}$ to elements of $\mo{\pi}$, through syntactic application mapping. That is, $\mathsf{E}\in U_{\mathsf{P},\rho \rightarrow \pi}$ can be viewed as the function mapping each term $\mathsf{E}' \in  U_{\mathsf{P},\rho}$ to the term $(\mathsf{E} \, \mathsf{E}') \in  U_{\mathsf{P},\pi}$. Similarly, every $n$-ary function symbol $\mathsf{f}$ appearing in $\mathsf{P}$ can be viewed as the function mapping each element $(\mathsf{E}_1, \ldots,\mathsf{E}_n) \in U_{\mathsf{P},\iota}^n$ to the term $(\mathsf{f} \, \mathsf{E}_1\,\cdots\,\mathsf{E}_n) \in U_{\mathsf{P},\iota}$.

\vspace{0.2cm}
\noindent
{\bf Remark:} There is a small technicality here which we need to clarify. In the case where $\rho=o$,
$\mathsf{E}\in U_{\mathsf{P},o \rightarrow \pi}$ is a partial function because it maps elements of
$U_{\mathsf{P},o}$ (and not of $U^+_{\mathsf{P},o}$) to elements of $U_{\mathsf{P},\pi}$; this is due
to the fact that our syntax does not allow an expression of type $o \rightarrow \pi$ to take as argument
an expression of the form $(\mathsf{E}_1\approx \mathsf{E}_2)$ nor of the form $(\pnot \mathsf{E})$.
In all other cases (i.e., when $\rho\neq o$), $\mathsf{E}$ represents a total function.
\begin{definition}\label{HInterp}
A \emph{(three-valued) Herbrand interpretation} $I$ of a program $\mathsf{P}$ consists of:
\begin{enumerate}
\item the functional type structure $\mathcal{S}_\mathsf{P}$, such that $D= U_{\mathsf{P},\iota}$, $A=U^+_{\mathsf{P},o}$ and $\mo{\rho \rightarrow \pi} =U_{\mathsf{P},\rho \rightarrow \pi}$ for every predicate type $\rho \rightarrow \pi$, called the Herbrand type structure of $\mathsf{P}$;
\item the assignment to each individual constant $\mathsf{c}$ in $\mathsf{P}$, of the element $I(\mathsf{c}) = \mathsf{c}$; to each predicate constant $\mathsf{p}$ in $\mathsf{P}$, of the element $I(\mathsf{p}) =\mathsf{p}$; to each function symbol $\mathsf{f}$ in $\mathsf{P}$, of the element $I(\mathsf{f}) = \mathsf{f}$;
\item a valuation function $\vwrt{\cdot}{I}$ for $\mathcal{S}_\mathsf{P}$,  assigning to each element of $U^+_{\mathsf{P},o}$ an element in  $\{\mathit{false}, \unk, \mathit{true}\}$, while satisfying the following:
	\begin{itemize}
	\item for all $\mathsf{E}_1,\mathsf{E}_2\in U_{\mathsf{P},\iota}$, $\vwrt{(\mathsf{E}_1\approx\mathsf{E}_2)}{I}= \begin{cases}
\mathit{false}, & \mbox{if } \mathsf{E}_1\neq \mathsf{E}_2\\
\mathit{true}, & \mbox{if } \mathsf{E}_1= \mathsf{E}_2
\end{cases}$;
	\item for all $\mathsf{E}\in U_{\mathsf{P},o}$, $\vwrt{(\mathsf{\pnot E)}}{I} = \begin{cases}
\mathit{false}, & \mbox{if } \vwrt{\mathsf{E}}{I}=\mathit{true}\\
\unk, & \mbox{if } \vwrt{\mathsf{E}}{I}=\unk\\
\mathit{true}, & \mbox{if } \vwrt{\mathsf{E}}{I}=\mathit{false}
\end{cases}$.
	\end{itemize}
\end{enumerate}
\end{definition}
We call $\vwrt{\cdot}{I}$ the {\em valuation function of $I$} and omit the reference to $\mathcal{S}_\mathsf{P}$, since the latter is common to all Herbrand interpretations of a program. In fact, individual Herbrand interpretations are only set apart by their valuation functions. If the valuation function $\vwrt{\cdot}{I}$ is two-valued, then $I$ will also be called a \emph{two-valued Herbrand interpretation}.
\begin{definition}
  A \emph{Herbrand state} (or simply \emph{state}) $s$ of a program $\mathsf{P}$ is a function that assigns to each variable $\mathsf{V}$ of type $\rho$ an element of $U_{\mathsf{P},\rho}$.
  %Moreover, $s[\mathsf{V}_1/d_1,\ldots,\mathsf{V}_n/d_n]$ denotes a state that is identical to $s$, with the exception that it assigns to each $\mathsf{V}_i$ the corresponding value $d_i$.
  %A Herbrand state of a program $\mathsf{P}$ is a state that assigns to $\mathsf{V}$ an element of $U_{\mathsf{P},\rho}$.
\end{definition}
Given a Herbrand interpretation $I$ and state $s$, we can define the semantics of expressions with respect to $I$ and $s$.
\begin{definition}
Let $\mathsf{P}$ be a program. Also, let $I$ be a Herbrand interpretation and $s$ a
Herbrand state of $\mathsf{P}$. Then the semantics of expressions with respect to $I$ and $s$ is defined as follows:
\begin{itemize}
\item $\mwrs{\mathsf{c}}{I}{s} = I(\mathsf{c})=\mathsf{c}$, for every individual constant $\mathsf{c}$;
\item $\mwrs{\mathsf{p}}{I}{s} =I(\mathsf{p})=\mathsf{p}$, for every predicate constant $\mathsf{p}$;
\item $\mwrs{\mathsf{V}}{I}{s} = s(\mathsf{V})$, for every variable $\mathsf{V}$;
\item $\mwrs{(\mathsf{f}\, \mathsf{E}_1\, \cdots\,\mathsf{E}_n)}{I}{s}
       =(I(\mathsf{f})\, \mwrs{\mathsf{E_1}}{I}{s}\, \cdots\,\mwrs{\mathsf{E_n}}{I}{s})
     = (\mathsf{f}\, \mwrs{\mathsf{E}_1}{I}{s}\, \cdots\,\mwrs{\mathsf{E}_n}{I}{s})$, for every $n$-ary function symbol $\mathsf{f}$;
\item $\mwrs{(\mathsf{E}_1\,\mathsf{E}_2)}{I}{s}= (\mwrs{\mathsf{E}_1}{I}{s}\,\mwrs{\mathsf{E}_2}{I}{s})$;
\item $\mwrs{(\mathsf{E}_1\approx \mathsf{E}_2)}{I}{s}= (\mwrs{\mathsf{E}_1}{I}{s}\approx\mwrs{\mathsf{E}_2}{I}{s})$;
\item $\mwrs{\mathsf{(\pnot E)}}{I}{s}= (\pnot \mwrs{\mathsf{E}}{I}{s}).$
\end{itemize}
\end{definition}
It is easy to see that the semantic function $\mo{\cdot}$ is well defined, in the sense that, for every Herbrand state $s$ and every expression $\mathsf{E}$ of every argument type $\rho$, we have $\mwrs{\mathsf{E}}{I}{s}\in\mo{\rho}$. Note that this makes $\mwrs{\mathsf{E}}{I}{s}$ a ground expression of the language. Also note, that if $\mathsf{E}$ is a ground expression then $\mwrs{\mathsf{E}}{I}{s} = \mathsf{E}$;
%Therefore, it is not necessary that we insist on the distinction between the two.
%This allows us to see the valuation function of any Herbrand interpretation as assigning truth values directly to ground atoms, as opposed to semantic elements.
therefore, if $\mathsf{E}$ is a ground literal, we can write $\vwrt{\mathsf{E}}{I}$ instead of $\vwrt{\mwrs{\mathsf{E}}{I}{s}}{I}$.
Stretching this abuse of notation a little further, we can extend a valuation function to assign truth values to ground conjunctions of literals; this allows us to define the concept of Herbrand models for our higher-order programs in the same way as in classical logic programming.
\begin{definition}
Let $\mathsf{P}$ be a program and $I$ be a Herbrand interpretation of $\mathsf{P}$. We define $\vwrt{\mathsf{L}_1, \ldots, \mathsf{L}_n}{I} = min\{\vwrt{\mathsf{L}_1}{I}, \ldots, \vwrt{\mathsf{L}_n}{I}\}$
for all $\mathsf{L}_1, \ldots, \mathsf{L}_n\in U^+_{\mathsf{P},o}$. Moreover, we say $I$ is a \emph{model} of $\mathsf{P}$ if $\vwrt{\mwrs{\mathsf{A}}{I}{s}}{I}\geq\vwrt{\mwrs{\mathsf{L}_1}{I}{s},\ldots, \mwrs{\mathsf{L}_m}{I}{s}}{I}$ holds for every clause $\mathsf{A} \leftarrow \mathsf{L}_1,\ldots, \mathsf{L}_m$ and every Herbrand state $s$ of $\mathsf{P}$.
\end{definition}

Bezem's semantics is based on the observation that, given a positive higher-order program, we can use the minimum model of its ground instantiation as a (two-valued) valuation function defining a Herbrand interpretation for the initial program itself.
We follow the same idea but now for programs with negation: we can use as the valuation function of a given $\mathcal{H}$ program,
the Herbrand model defined by any semantic approach that applies to its ground instantiation. Actually, we demonstrate (see Theorem~\ref{thr:min_mod}
below) that any interpretation $I$ of the higher-order program $\mathsf{P}$ will be a minimal model of $\mathsf{P}$,
if its chosen valuation function is a minimal model of $\mathsf{Gr(P)}$.

We consider two different notions of minimality, based on the \emph{truth} ordering $\tleq$ and the \emph{Fitting} ordering $\fleq$ of truth values, respectively. Recall that $\tleq$ is the partial order defined by $\mathit{false}\tleq \unk \tleq \mathit{true}$, while $\fleq$ is the partial order defined by $\unk\fleq \mathit{false}$ and $\unk\fleq \mathit{true}$.
\begin{definition}
If $I$ and $J$ are two Herbrand interpretations of a higher-order program $\mathsf{P}$, we say $I\tleq J$ (respectively, $I\fleq J$) if, for all atoms $\mathsf{A}$ in $U_{\mathsf{P},o}$ we have $v_I(\mathsf{A})\tleq v_J(\mathsf{A})$ (resp., $v_I(\mathsf{A})\fleq v_J(\mathsf{A})$). If $M$ is a model of $\mathsf{P}$ then we say it is $\tleq$-minimal (resp., $\fleq$-minimal) if there does not exist a different model $N$ of $\mathsf{P}$, such that $N\tleq M$ (resp., $N\fleq M$).
\end{definition}
%
%As usual, a model $M$ of a program $\mathsf{P}$ is said to be $\tleq$-minimal (respectively, $\fleq$-minimal) if there does not exist a different model $N$ of $\mathsf{P}$, such that $N\tleq M$ (respectively, $N\fleq M$).
%
%
\begin{theorem}
\label{thr:min_mod}
Let $\mathsf{P}$ be a program and let $\mathsf{Gr(P)}$ be its ground instantiation. Also let $M$ be a partial interpretation\repnote{\ }of $\mathsf{Gr(P)}$ and let $\mathcal{M}$ be the Herbrand interpretation of $\mathsf{P}$, such that $\vwrt{\mathsf{A}}{\mathcal{M}} = M(\mathsf{A})$ for every $\mathsf{A}\in U_{\mathsf{P},o}$.
Then, $\mathcal{M}$ is a Herbrand model of $\mathsf{P}$ if and only if $M$ is a model of $\mathsf{Gr(P)}$. Moreover, $\mathcal{M}$ is $\tleq$-minimal (respectively, $\fleq$-minimal) if and only if $M$ is $\tleq$-minimal (respectively, $\fleq$-minimal).
\end{theorem}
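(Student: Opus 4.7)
The plan is to reduce satisfaction of $\mathsf{P}$'s clauses under $\mathcal{M}$ to satisfaction of $\mathsf{Gr(P)}$'s clauses under $M$ via the natural correspondence between Herbrand states and ground substitutions, and then derive the minimality correspondence from the fact that both orderings are defined pointwise on $U_{\mathsf{P},o}$.

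First I would establish a substitution lemma: for every expression $\mathsf{E}$ and Herbrand state $s$, if $\theta_s$ is the ground substitution defined by $\theta_s(\mathsf{V}) = s(\mathsf{V})$ on $vars(\mathsf{E})$, then $\mwrs{\mathsf{E}}{\mathcal{M}}{s} = \mathsf{E}\theta_s$. This is a straightforward structural induction on $\mathsf{E}$, matching each clause of Definition \ref{def:ground_inst} against the corresponding clause of the semantic definition, and exploiting the fact that in the Herbrand type structure function symbols and predicate-typed terms act by syntactic application. Note that every $s(\mathsf{V})$ is, by Definition of Herbrand state, a ground term in $U_{\mathsf{P},\rho}$ formed from the symbols in $\mathsf{P}$, so $\theta_s$ is a legitimate ground substitution in the sense of Definition \ref{ground_instantiation_definition}; conversely, any such ground substitution $\theta$ on the variables of a clause extends to a state $s_\theta$ by choosing arbitrary values elsewhere.

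Second, I would use this lemma for the model equivalence. In the forward direction, pick any ground instance $C\theta$ of a clause $C = \mathsf{A} \leftarrow \mathsf{L}_1,\ldots,\mathsf{L}_m$, extend $\theta$ to a state $s$, and apply the substitution lemma to obtain $\mwrs{\mathsf{A}}{\mathcal{M}}{s} = \mathsf{A}\theta$ and $\mwrs{\mathsf{L}_i}{\mathcal{M}}{s} = \mathsf{L}_i\theta$; the inequality guaranteed by $\mathcal{M}$ being a model of $C$ under $s$ becomes precisely the inequality required for $M$ to satisfy $C\theta$, since $\vwrt{\cdot}{\mathcal{M}}$ agrees with $M$ on $U_{\mathsf{P},o}$ (and the equality and negation cases are forced by Definition \ref{HInterp}). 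Conversely, for any clause $C$ and any state $s$, the restriction $\theta_s$ of $s$ to $vars(C)$ yields a ground instance $C\theta_s \in \mathsf{Gr(P)}$ whose satisfaction by $M$ translates back to $\mathcal{M}$ satisfying $C$ under $s$.

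Third, for the minimality claim I would observe that the map $I \mapsto \vwrt{\cdot}{I}\!\!\upharpoonright_{U_{\mathsf{P},o}}$ is a bijection between Herbrand interpretations of $\mathsf{P}$ and partial interpretations of $\mathsf{Gr(P)}$, and that by their very definitions both $\tleq$ and $\fleq$ are determined pointwise on $U_{\mathsf{P},o}$ on both sides. Hence the bijection preserves and reflects both orderings, and by the model equivalence it restricts to a bijection between models of $\mathsf{P}$ and models of $\mathsf{Gr(P)}$; minimality under either ordering therefore transfers in both directions. The main obstacle, modest as it is, will be the application case of the substitution lemma: one has to unpack the syntactic-application convention of the Herbrand type structure to verify that $\mwrs{(\mathsf{E}_1\,\mathsf{E}_2)}{\mathcal{M}}{s} = (\mathsf{E}_1\theta_s\,\mathsf{E}_2\theta_s) = (\mathsf{E}_1\,\mathsf{E}_2)\theta_s$, and to check that the partial-function technicality flagged in the remark above Definition \ref{HInterp} does not bite, since the syntax of $\mathcal{H}$ never allows the problematic $o$-typed arguments to arise inside a term.
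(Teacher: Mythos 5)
Your proposal is correct and follows essentially the same route as the paper's proof: the paper likewise exploits the state/ground-substitution correspondence, proves $\mathsf{A}\theta = \mwrs{\mathsf{A}}{\mathcal{M}}{s}$ by "a trivial induction on the structure of the expression" (your substitution lemma), transfers the model inequality in both directions, and handles minimality by passing a hypothetical smaller model across the correspondence. Your packaging of the minimality step as an order-preserving bijection is just a cleaner phrasing of the paper's contradiction argument, not a different proof.
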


As an application of the above developments we define two special Herbrand interpretations of higher-order programs, employing the well-known perfect model~\cite{AptBW88,Gelder89} and well-founded model~\cite{GelderRS91} of the ground instantiation of a program, as valuation functions.
\begin{definition}
Let $\mathsf{P}$ be a program and let $\mathsf{Gr(P)}$ be the ground instantiation of $\mathsf{P}$. Also, let $\pgp$ be the perfect model\repnote{\ }(if this exists) and $\wgp$ be the well-founded model\repnote{\ }of $\mathsf{Gr(P)}$. We define $\pmp$ to be the two-valued Herbrand interpretation of $\mathsf{P}$ such that $\vwrt{\mathsf{A}}{\pmp} = \pgp(\mathsf{A})$ for every $\mathsf{A}\in U_{\mathsf{P},o}$. Similarly, we define $\wfmp$ to be the three-valued Herbrand interpretation of $\mathsf{P}$ such that $\vwrt{\mathsf{A}}{\wfmp} = \wgp(\mathsf{A})$ for every $\mathsf{A}\in U_{\mathsf{P},o}$.
\end{definition}

Clearly, by Theorem~\ref{thr:min_mod}, $\pmp$ (if it exists) is a two-valued minimal model and $\wfmp$ is a three-valued minimal model of $\mathsf{P}$. In the following sections we investigate their suitability for providing extensional semantics for $\mathcal{H}$ programs.
In particular, we examine if each of them enjoys the extensionality property, formally defined~\cite{Bezem99,Bezem01} through relations $\exeq[v,\tau]$
over the set of expressions of a given type $\tau$ and under a given valuation function $v$. These relations
intuitively express extensional equality of type $\tau$, in the sense discussed in Section~\ref{intuitive}. For the purposes of this paper only extensional equality of argument types will be needed, for which the formal definition is as follows:
\begin{definition}\label{def:extensional_eq}
Let $\mathcal{S}$ be a functional type structure and $v$ be a valuation function for $\mathcal{S}$. For every argument type $\rho$ we define the relations $\exeq[v,\rho]$ on $\mo{\rho}$ as follows: Let $d, d'\in \mo{\rho}$; then $d \exeq[v,\rho] d'$ if and only if
\begin{enumerate}
\item $\rho = \iota$ and $d=d'$, or
\item $\rho = o$ and $v(d) = v(d')$, or
\item $\rho = \rho'\rightarrow\pi$ and $d\,e \exeq[v,\pi] d'\,e'$ for all $e, e'\in\mo{\rho'}$, such that $e\exeq[v,\rho'] e'$ and $d\,e, d'\,e'$ are both defined.
\end{enumerate}
\end{definition}
One can easily verify that, for all $d, d'\in \mo{\rho_1\rightarrow\cdots\rightarrow\rho_n\rightarrow o}$, $e_1, e_1' \in \mo{\rho_1}$, \ldots, $e_n, e_n'\in \mo{\rho_n}$, if $d\exeq[v,\rho_1\rightarrow\cdots\rightarrow\rho_n\rightarrow o] d'$, $e_1\exeq[v,\rho_1] e_1'$, \ldots, $e_n\exeq[v,\rho_n] e_n'$ and $d\ e_1\ \cdots\ e_n, d'\ e_1'\ \cdots\ e_n'$ are both defined, then $v(d\ e_1\ \cdots\ e_n) = v(d'\ e_1'\ \cdots\ e_n')$.

Generally, it is not guaranteed that such relations will be equivalence relations; rather they are partial equivalences (they are shown in \cite{Bezem99} to be symmetric and transitive). Whether they are moreover reflexive, depends on the specific valuation function.

The above discussion leads to the notion of {\em extensional interpretation}:
\begin{definition}\label{def:extensionality}
Let $\mathsf{P}$ be a program and let $I$ be a Herbrand interpretation of $\mathsf{P}$ with valuation function $v_I$. We say $I$ is \emph{extensional} if for all argument types $\rho$  the relations $\exeq[v_I, \rho]$ are reflexive, i.e. for all $\mathsf{E} \in \mo{\rho}$, it holds that $\mathsf{E} \exeq[v_I, \rho] \mathsf{E}$.
\end{definition}
The above notion will be extensively used in the following two sections.

\section{Non-Extensionality of the Well-Founded Model}\label{well_founded}
In this section we demonstrate that the adaptation of Bezem's technique under the well-founded semantics
does not in general preserve extensionality. In particular, we exhibit below a program that has
a non-extensional well-founded model.
\begin{example}\label{exm:not_extens}
Consider the higher-order program $\mathsf{P}$:
\[
\begin{array}{l}
\mbox{\tt s Q $\leftarrow$ Q (s Q)}\\
\mbox{\tt p R $\leftarrow$ R}\\
\mbox{\tt q R $\leftarrow$ $\pnot$(w R)}\\
\mbox{\tt w R $\leftarrow$ $\pnot$R}
\end{array}
\]
where the predicate variable {\tt Q} is of type $o\rightarrow o$ and the predicate variable {\tt R}
is of type $o$. Before stating formally the non-extensionality result, certain explanations
at an intuitive level are in order. Consider first the predicate {\tt p} of type $o\rightarrow o$.
One can view {\tt p} as representing the identity relation on truth values, i.e., as the relation
$\{(v,v)\mid v\in\{\mathit{false},0,\mathit{true}\}\}$. It is not hard to see that the predicate
{\tt q} of type $o\rightarrow o$, represents exactly the same relation. However, the definition of
{\tt q} involves two applications of negation, while {\tt p} is defined directly (without the use
of negation).

Consider now the predicate {\tt s} of type $(o\rightarrow o) \rightarrow o$ which can take as a
parameter either {\tt p} or {\tt q}. When {\tt s} takes {\tt p} as a parameter, we get the following
two clauses (by substituting {\tt p} for {\tt Q} and {\tt (s p)} for {\tt R} in the above program):
\[
\begin{array}{l}
\mbox{\tt s p $\leftarrow$ p (s p)}\\
\mbox{\tt p (s p) $\leftarrow$ (s p)}
\end{array}
\]
A recursive definition of this form assigns to {\tt (s p)}, under the well-founded semantics,
the value $\mathit{false}$. Consider on the other hand the case where {\tt s} takes {\tt q}
as a parameter. Then, by doing analogous substitutions, we get the following three clauses:
\[
\begin{array}{l}
\mbox{\tt s q $\leftarrow$ q (s q)}\\
\mbox{\tt q (s q) $\leftarrow$ $\pnot$(w (s q))}\\
\mbox{\tt w (s q) $\leftarrow$ $\pnot$(s q)}
\end{array}
\]
Under the well-founded semantics, {\tt (s q)} is assigned the value $0$.
In other words, despite the fact that {\tt p} and {\tt q} are extensionally
equal (see also below), {\tt (s p)} and {\tt (s q)} have
different truth values. In conclusion, the adaptation of the well-founded
semantics under Bezem's technique does not lead to an extensional model
in all cases.\mathproofbox
\end{example}
Of course, the above discussion is based on intuitive arguments, but it
is not hard to formalize it. The main difficulty lies in establishing that
{\tt p} and {\tt q} are extensionally equal because the above program
has an infinite ground instantiation $\mathsf{Gr(P)}$ (see ~\ref{app:counterexample}).
The following lemma, whose detailed proof is given in~\ref{app:counterexample}, suggests that $\wfmp$, i.e. the Herbrand
interpretation of our example program $\mathsf{P}$ defined by using the well-founded model $\wgp$ of
$\mathsf{Gr(P)}$ as the valuation function, is not extensional.
\begin{lemma}
\label{lm:exm_is_not_extens}
The Herbrand interpretation $\wfmp$ of the program of Example~\ref{exm:not_extens} is not extensional.
\end{lemma}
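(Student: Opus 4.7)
\textbf{Proof plan for Lemma~\ref{lm:exm_is_not_extens}.} The plan is to contradict reflexivity of $\exeq[v_{\wfmp},(o\rightarrow o)\rightarrow o]$ by showing that \texttt{s} is not related to itself, via a witness pair $(\texttt{p},\texttt{q})$ that \emph{is} extensionally equal but is separated by \texttt{s}. Concretely, I will (i) pin down the values $v_{\wfmp}(\texttt{p}\ \mathsf{R})$ and $v_{\wfmp}(\texttt{q}\ \mathsf{R})$ for every ground argument $\mathsf{R}\in U_{\mathsf{P},o}$, (ii) use these to prove $\texttt{p}\exeq[v_{\wfmp},o\rightarrow o]\texttt{q}$, (iii) compute $v_{\wfmp}(\texttt{s}\ \texttt{p})=\mathit{false}$ and $v_{\wfmp}(\texttt{s}\ \texttt{q})=\unk$, and (iv) combine to derive the failure of reflexivity. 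Throughout, Theorem~\ref{thr:min_mod} lets me work entirely with the well-founded model $\wgp$ of the (infinite) propositional program $\mathsf{Gr(P)}$.

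First I would inspect $\mathsf{Gr(P)}$. The predicate symbols \texttt{p}, \texttt{q}, \texttt{w} each appear in the head of exactly one rule scheme. So for every ground $\mathsf{R}:o$ of $\mathsf{P}$ the ground clauses involving these predicates are exactly $\texttt{p}\ \mathsf{R}\leftarrow \mathsf{R}$, $\texttt{q}\ \mathsf{R}\leftarrow \pnot(\texttt{w}\ \mathsf{R})$, and $\texttt{w}\ \mathsf{R}\leftarrow \pnot \mathsf{R}$. These are the \emph{only} clauses in $\mathsf{Gr(P)}$ whose heads are $\texttt{p}\ \mathsf{R}$, $\texttt{q}\ \mathsf{R}$, $\texttt{w}\ \mathsf{R}$, respectively, so by a standard argument about the alternating fixpoint that defines the well-founded model I obtain $v_{\wfmp}(\texttt{w}\ \mathsf{R})=\pnot v_{\wfmp}(\mathsf{R})$, and then $v_{\wfmp}(\texttt{p}\ \mathsf{R})=v_{\wfmp}(\mathsf{R})= v_{\wfmp}(\texttt{q}\ \mathsf{R})$ for every $\mathsf{R}\in U_{\mathsf{P},o}$ (using that $\pnot$ is its own inverse on $\{\mathit{false},\unk,\mathit{true}\}$). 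From Definition~\ref{def:extensional_eq}, two ground terms $\mathsf{R},\mathsf{R}'$ of type $o$ satisfy $\mathsf{R}\exeq[v_{\wfmp},o]\mathsf{R}'$ iff $v_{\wfmp}(\mathsf{R})=v_{\wfmp}(\mathsf{R}')$, so the identities just obtained immediately yield $v_{\wfmp}(\texttt{p}\ \mathsf{R})=v_{\wfmp}(\texttt{q}\ \mathsf{R}')$ whenever $\mathsf{R}\exeq[v_{\wfmp},o]\mathsf{R}'$, which is exactly $\texttt{p}\exeq[v_{\wfmp},o\rightarrow o]\texttt{q}$.

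Next I would carry out the two well-founded computations for \texttt{s}. Restricting to the (relevant ground instances obtained by substituting $\texttt{Q}\mapsto\texttt{p}$, $\texttt{R}\mapsto\texttt{s}\ \texttt{p}$), we get $\texttt{s}\ \texttt{p}\leftarrow \texttt{p}\ (\texttt{s}\ \texttt{p})$ and $\texttt{p}\ (\texttt{s}\ \texttt{p})\leftarrow \texttt{s}\ \texttt{p}$. No other clauses in $\mathsf{Gr(P)}$ have $\texttt{s}\ \texttt{p}$ or $\texttt{p}\ (\texttt{s}\ \texttt{p})$ as head, so both atoms sit in a purely positive loop unsupported from outside; the unfounded-set construction therefore places them in the greatest unfounded set, giving $v_{\wfmp}(\texttt{s}\ \texttt{p})=\mathit{false}$. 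For $\texttt{s}\ \texttt{q}$, the analogous substitutions yield $\texttt{s}\ \texttt{q}\leftarrow \texttt{q}\ (\texttt{s}\ \texttt{q})$, $\texttt{q}\ (\texttt{s}\ \texttt{q})\leftarrow \pnot(\texttt{w}\ (\texttt{s}\ \texttt{q}))$, and $\texttt{w}\ (\texttt{s}\ \texttt{q})\leftarrow \pnot(\texttt{s}\ \texttt{q})$, forming an even-length negative loop through $\texttt{s}\ \texttt{q}$. A routine check that neither $\mathit{true}$ nor $\mathit{false}$ can be consistently propagated along the alternating fixpoint then gives $v_{\wfmp}(\texttt{s}\ \texttt{q})=\unk$.

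Finally I combine these facts. If $\texttt{s}\exeq[v_{\wfmp},(o\rightarrow o)\rightarrow o]\texttt{s}$ held, then applying Definition~\ref{def:extensional_eq}(3) to the pair $\texttt{p}\exeq[v_{\wfmp},o\rightarrow o]\texttt{q}$ would force $\texttt{s}\ \texttt{p}\exeq[v_{\wfmp},o]\texttt{s}\ \texttt{q}$, i.e.\ $v_{\wfmp}(\texttt{s}\ \texttt{p})=v_{\wfmp}(\texttt{s}\ \texttt{q})$; but $\mathit{false}\neq\unk$. Hence $\wfmp$ is not extensional. The main obstacle I foresee is not the extensional-equality accounting, which is essentially bookkeeping, but the honest verification that the well-founded computation on the (infinite) $\mathsf{Gr(P)}$ really does yield the locally expected values. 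Concretely, one has to argue that the rest of $\mathsf{Gr(P)}$ cannot provide alternative support for $\texttt{p}\ \mathsf{R}$, $\texttt{q}\ \mathsf{R}$, $\texttt{w}\ \mathsf{R}$, $\texttt{s}\ \texttt{p}$, or $\texttt{s}\ \texttt{q}$; this is where a careful head-analysis of $\mathsf{Gr(P)}$ together with the alternating-fixpoint characterisation is needed, and that is the technical core deferred to~\ref{app:counterexample}.
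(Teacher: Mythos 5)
Your proposal is correct and follows essentially the same route as the paper: establish $\mathtt{p}\exeq[\wgp,o\rightarrow o]\mathtt{q}$ via the unique-defining-clause argument and double-negation cancellation, compute $\wgp(\mathtt{s\ p})=\mathit{false}$ and $\wgp(\mathtt{s\ q})=\unk$, and combine via Definitions~\ref{def:extensional_eq} and~\ref{def:extensionality}. The only difference is that the paper carries out the two well-founded computations in full detail using the iterated fixed-point construction of Przymusinska and Przymusinski (inductions on $n$ and on the ordinal stage $\alpha$), whereas you defer them to a ``routine check'' against the alternating-fixpoint characterisation — the same verification, just left implicit.
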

The consequences that the above lemma has for the investigation of alternative extensional
three-valued semantics for higher-order logic programs with negation, will be discussed in Section~\ref{restrictions}.

A natural question that arises is whether there exists a broad and useful class of programs that
are extensional under the well-founded semantics. The next section answers exactly this question.

\section{Extensionality of Stratified Programs}\label{sec:ext_mon}
In this section we present the notion of {\em stratified} higher-order programs, originally introduced in~\cite{RS16}, and argue that the well-founded model of such a program enjoys the extensionality property defined in Section~\ref{semantics_of_language}.
In the following definition, a predicate type $\pi$ is understood to be \emph{greater than} a second predicate type $\pi'$, if $\pi$ is of the form $\rho_1\rightarrow\cdots\rightarrow\rho_n\rightarrow\pi'$, where $n\geq1$.
\begin{definition}\label{stratified}
A program $\mathsf{P}$ is called {\em stratified} if and only if it is possible to decompose the
set of all predicate constants that appear in $\mathsf{P}$ into a finite number $r$ of disjoint sets (called {\em strata})
$S_1,S_2,\ldots,S_r$, such that for every clause
$\mathsf{H} \leftarrow \mathsf{A}_1,\ldots,\mathsf{A}_m,\pnot \mathsf{B}_1,\ldots,\pnot \mathsf{B}_n$
in $\mathsf{P}$, where the predicate constant of $\mathsf{H}$ is $\mathsf{p}$, we have:
\begin{enumerate}
\item for every $i\leq m$, if $\mathsf{A}_i$ is a term that starts with a predicate constant $\mathsf{q}$, then
      $\textit{stratum}(\mathsf{q}) \leq \textit{stratum}(\mathsf{p})$;

\item for every $i\leq m$, if $\mathsf{A}_i$ is a term that starts with a predicate variable $\mathsf{Q}$, then
      for all predicate constants $\mathsf{q}$ that appear in $\mathsf{P}$ such that the type of
      $\mathsf{q}$ is greater than or equal to the type of $\mathsf{Q}$, it holds
      $\textit{stratum}(\mathsf{q}) \leq \textit{stratum}(\mathsf{p})$;

\item for every $i\leq n$, if $\mathsf{B}_i$ starts with a predicate constant $\mathsf{q}$, then
      $\textit{stratum}(\mathsf{q}) < \textit{stratum}(\mathsf{p})$;

\item for every $i\leq n$, if $\mathsf{B}_i$ starts with a predicate variable $\mathsf{Q}$, then
      for all predicate constants $\mathsf{q}$ that appear in $\mathsf{P}$ such that the type of
      $\mathsf{q}$ is greater than or equal to the type of $\mathsf{Q}$, it holds
      $\textit{stratum}(\mathsf{q}) < \textit{stratum}(\mathsf{p})$;
\end{enumerate}
where  $\textit{stratum}(\mathsf{r}) = i$ if the predicate constant
$\mathsf{r}$ belongs to $S_i$.
\end{definition}
One may easily see that the stratification for classical logic programs~\cite{AptBW88,Gelder89} is a special case of the above definition.
\begin{example}
It is straightforward to see that the program:
\[
\begin{array}{l}
\mbox{\tt p Q $\leftarrow$ $\pnot$(Q a)}\\
\mbox{\tt q X $\leftarrow$ (X$\approx$a)}
\end{array}
\]
is stratified. However, it can easily be checked that the program:
\[
\begin{array}{l}
\mbox{\tt p Q $\leftarrow$ $\pnot$(Q a)}\\
\mbox{\tt q X Y $\leftarrow$ (X$\approx$a),(Y$\approx$a),p (q a)}
\end{array}
\]
%
%\[
%\begin{array}{l}
%\mbox{\tt p(a):-$\pnot$Q(a).}
%\end{array}
%\]
%
is not stratified because if the term {\tt (q a)} is substituted for {\tt Q} we get
a circularity through negation. Notice that the type of {\tt q} is $\iota\rightarrow \iota \rightarrow o$ and
it is greater than the type of {\tt Q} which is $\iota \rightarrow o$.\mathproofbox
\end{example}

As it turns out, stratified higher-order logic programs have an extensional well-founded
model. The proof of the following theorem can be found in~\ref{app:stratified}.
\begin{theorem}
\label{thr:strat_extens}
The well-founded model $\wfmp$ of a stratified program $\mathsf{P}$ is extensional.
\end{theorem}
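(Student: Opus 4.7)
The plan is to reduce the theorem to Bezem-style extensionality arguments applied stratum by stratum, using the fact that stratification propagates cleanly from $\mathsf{P}$ to its ground instantiation.

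First I would show that the ground instantiation $\mathsf{Gr(P)}$ is stratified in the ordinary (propositional) sense. Every ground atom in $U_{\mathsf{P},o}$ has an outermost predicate constant, which gives it a natural stratum. The critical observation is that Definition~\ref{stratified} imposes conditions on predicate \emph{variables} that account for every possible ground substitution: if a body literal starts with a predicate variable $\mathsf{Q}$, then any ground term that could replace $\mathsf{Q}$ in $\mathsf{Gr(P)}$ necessarily begins with a predicate constant $\mathsf{q}$ whose type is at least that of $\mathsf{Q}$, and the required stratum inequality holds by clauses (2) and (4) of the definition. Hence $\mathsf{Gr(P)}$ is a (possibly infinite) stratified propositional program; by the standard result on stratified negation, its well-founded model $\wgp$ is total, and hence $\wfmp$ is a \emph{two-valued} Herbrand interpretation of $\mathsf{P}$, which makes the proof obligation simpler (Definition~\ref{def:extensional_eq} no longer needs to juggle $\unk$).

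Next I would set up an induction on the strata $1,\dots,r$. The well-founded model $\wgp$ of a stratified program can be built incrementally: let $M_i$ denote the restriction of $\wgp$ to atoms whose outermost predicate constant lies in strata $\leq i$, obtained by taking a least fixed point of the stratum-$i$ clauses over the already-fixed values from $M_{i-1}$. Let $\mathcal{M}_i$ be the corresponding Herbrand interpretation of $\mathsf{P}$ (with $v_{\mathcal{M}_i}$ undefined on atoms of higher strata). The induction hypothesis to carry is that $\mathcal{M}_i$ is extensional on terms whose predicate-constant constituents all have stratum $\leq i$; equivalently, the relations $\exeq[v_{\mathcal{M}_i},\rho]$ are reflexive on all ground terms of argument type $\rho$ that only refer to those constants. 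The base case $i=0$ is vacuous (only equality atoms are in scope, and equality is trivially extensional).

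For the inductive step, I would argue as follows. Once stratum $i$ is being processed, all negative literals in its clauses point into strata $< i$, whose values are already fixed by $\mathcal{M}_{i-1}$ and, by the induction hypothesis, respect extensional equality. Thus, relative to the fixed lower strata, the stratum-$i$ clauses behave as a \emph{positive} higher-order program: each negative literal $\pnot \mathsf{B}$ reduces to a constant truth value that is preserved under $\exeq$. One can then adapt Bezem's original extensionality argument for positive programs (as developed in~\cite{Bezem99,Bezem01} and extended in~\cite{RS16}) to the operator defining stratum $i$, by a transfinite induction on the stages of its least-fixed-point construction: at each stage, if two arguments are extensionally equal, then applying a stratum-$i$ predicate to them produces the same truth value, because the clause bodies are built from (i) positive literals whose atoms, by induction on term structure, preserve the equivalence, and (ii) negative literals already handled by the previous stratum.

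The main obstacle I anticipate is \emph{exactly} this adaptation of Bezem's positive-program argument inside one stratum, in particular dealing with terms of mixed strata (a stratum-$i$ term may be applied to arguments containing constants of strata $<i$), and with the interaction between the fixed-point stages of the current stratum and the substitutions over higher-order variables. Handling this cleanly will likely require an auxiliary lemma stating that, throughout the fixed-point construction for stratum $i$, the intermediate interpretations already satisfy a Bezem-style ``indistinguishability'' property with respect to the $\exeq$ relations inherited from $\mathcal{M}_{i-1}$; conservation of this property through each application of the immediate-consequence operator is what makes the induction go through and ultimately yields extensionality of $\wfmp = \mathcal{M}_r$.
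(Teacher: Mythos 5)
Your high-level strategy is the paper's: show that $\mathsf{Gr(P)}$ is locally stratified (this is Lemma~\ref{lem:strat_to_localstr}), conclude that $\wgp$ coincides with the total perfect model $\pgp$ constructed stratum by stratum via iterated least fixed points, and then run a Bezem-style indistinguishability argument along the strata and the fixed-point stages. However, the obstacle you flag at the end is a genuine gap in the induction you set up, not merely a technical chore. The hypothesis ``$\mathcal{M}_i$ is extensional on terms whose predicate-constant constituents all have stratum $\leq i$'' does not close the inductive step. First, reflexivity of $\exeq[v,\rho'\rightarrow\pi]$ at a term $d$ quantifies over \emph{all} pairs $e\exeq[v,\rho']e'$ in $\mo{\rho'}=U_{\mathsf{P},\rho'}$, and those arguments may contain predicate constants of strata $>i$; so your restricted reflexivity claim is not a well-formed weakening of Definition~\ref{def:extensional_eq}, and the partial interpretation $\mathcal{M}_i$ does not even assign values to all the atoms needed to evaluate it. Second, the terms that actually arise in the stratum-$i$ step mix strata in the troublesome direction: for a body literal $\pnot(\mathsf{Q}\,\mathsf{E})$ of a stratum-$i$ clause, condition (4) of Definition~\ref{stratified} only forces the \emph{leftmost} constant of a ground term $\theta(\mathsf{Q})$ into a stratum $<i$; its inner arguments may be constants of stratum $\geq i$, so $\theta(\mathsf{Q})$ is not covered by your hypothesis for $\mathcal{M}_{i-1}$ even though the atom $\theta(\mathsf{Q})\,\mathsf{E}$ is decided at stage $i-1$.

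The paper resolves exactly this by inverting the nesting of the inductions. The outermost induction is structural, on the argument type, so that reflexivity of $\exeq[\pgp,\rho]$ at all simpler types $\rho$ is available for arbitrary ground terms regardless of the strata of their constituents. Inside it, for atoms $\mathsf{A}$ headed by \emph{any} predicate constant and containing only variables of simpler types, one proves two stratum-indexed properties whose conclusions refer to the \emph{full} perfect model rather than to the current approximation: if $\mathsf{A}\theta$ is true (respectively false) in the stage-$\alpha$ interpretation $N_\alpha$, then $\mathsf{A}\theta'$ is true (respectively false) in $\pgp$, for every pointwise extensionally equivalent $\theta'$. The positive half is proved by a further induction on the stages of $\Psi_{N_\alpha}$ (your ``positive program within a stratum'' idea); the negative half additionally uses the fact that $\mathsf{A}\theta$ and $\mathsf{A}\theta'$ share a leftmost predicate constant and hence a stratum, together with $N_{\alpha+1}\fleq\pgp$, to rule out the value $\unk$ and derive a contradiction from symmetry. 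To rescue your stratum-first induction you would have to reformulate its hypothesis in essentially this form; as stated, the inductive step cannot be completed.
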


Despite the fact that stratified programs lead to an extensional well-founded model,
we have not been able to verify that the same property holds for {\em locally stratified
higher-order logic programs} (for a formal definition of the notion of local stratification
for ${\cal H}$ programs, please see~\cite{RS16}). On the other hand, our attempts to find
a locally stratified program with a non-extensional well-founded model have also been
unsuccessful, and therefore it is not at present clear to us whether this class of programs
is well-behaved with respect to extensionality, or not.

\section{The Restrictions of Three-Valued Approaches}\label{restrictions}
In this section we re-examine the counterexample of Section~\ref{well_founded} but now from
a broader perspective. In particular, we indicate that in order to achieve an extensional three-valued
semantics for higher-order logic programs with negation, one has to make some (arguably)
non-standard assumptions regarding the behaviour of negation in such programs. On the other hand,
a logic with an infinite number of truth values appears to be a more appropriate
vehicle for achieving extensionality. In the following discussion, we assume some basic
familiarity with the main intuition behind the approaches described in~\cite{RS16} and
in~\cite{CharalambidisER14}.

Consider again the program of Section~\ref{well_founded}.
Under the infinite-valued adaptation of Bezem's approach given in~\cite{RS16} and also under
the domain-theoretic infinite-valued approach of~\cite{CharalambidisER14}, the semantics of
that program {\em is} extensional. The reason is that both of these approaches differentiate
the meaning of {\tt p} from the meaning of {\tt q}. The truth domain in both approaches
is the set:
$$V = \{F_\alpha\mid \alpha <\Omega\}\cup \{0\} \cup \{T_\alpha \mid \alpha<\Omega\}$$
where $F_\alpha$ and $T_\alpha$ represent different degrees of truth and falsity, and $\Omega$ is the
first uncountable ordinal. Under this truth domain, predicate {\tt p} (intuitively)
corresponds to the infinite-valued relation:
$$p=\{(v,v)\mid v \in V\}$$
while predicate {\tt q} corresponds to the relation:
$$q=\{(F_\alpha,F_{\alpha+2})\mid\alpha<\Omega\}\cup\{(0,0)\} \cup\{(T_\alpha,T_{\alpha+2})\mid \alpha<\Omega\}$$
Obviously, the relations $p$ and $q$ are different as sets and therefore it is not a surprise that under
both the semantics of~\cite{RS16} and~\cite{CharalambidisER14}, the atoms {\tt (s p)} and {\tt (s q)}
have different truth values. Notice, however, that if we collapse $p$ and $q$ in three-valued logic
(i.e., if we map each $F_\alpha$ to $\mathit{false}$, each $T_\alpha$ to $\mathit{true}$, and 0 to 0),
the collapsed relations become identical.

Assume now that want to devise an (alternative to the one presented in this paper) extensional
three-valued semantics for ${\cal H}$  programs. Under such a semantics, it seems reasonable to assume that
{\tt p} and {\tt q} would correspond to the same three-valued relation, namely $\{(v,v)\mid v\in\{\mathit{false},0,\mathit{true}\}\}$.
Notice however that from a logic programming perspective,  {\tt p} and {\tt q} are expected to have a different
operational behaviour when they appear inside a program. In particular, given the program:
\[
\begin{array}{l}
\mbox{\tt s Q $\leftarrow$ Q (s Q)}\\
\mbox{\tt p R $\leftarrow$ R}
\end{array}
\]
we expect the atom {\tt (s p)} to have the value $\mathit{false}$ (due to the circularity that occurs
if we try to evaluate it), while given the program:
\[
\begin{array}{l}
\mbox{\tt s Q $\leftarrow$ Q (s Q)}\\
\mbox{\tt q R $\leftarrow$ $\pnot$(w R)}\\
\mbox{\tt w R $\leftarrow$ $\pnot$R}
\end{array}
\]
we expect the atom {\tt (s q)} to have the value $0$ due to the circularity through negation.
At first sight, the above discussion seems to suggest that there is no way we can have a three-valued
extensional semantics for all higher-order logic programs with negation.

However, the above discussion is based mainly on our experience regarding the behaviour of first-order logic
programs with negation. One could argue that we could devise a semantics under which {\tt (s q)} will also return
the value $\mathit{false}$. One possible justification for such a semantics would be that the definition of
{\tt q} uses two negations which cancel each other, and therefore we should actually expect {\tt q} to behave exactly
like {\tt p} when it appears inside a program. Despite the fact that such a proposal seems somewhat unintuitive to
us, we can not exclude it as a possibility. It is worth noting that such cancellations of double negations
appear in certain semantic approaches to negation. For example, for certain extended propositional
programs, the semantics based on approximation fixpoint theory has the effect of canceling double
negations (see for example~\cite{DBV12}[page 185, Example 1]). It is possible that
higher-order logic programs with negation behave similarly to extended propositional programs,
and it is conceivable that one could construct an extensional three-valued semantics for all higher-order
logic programs with negation, using an approach based on approximation fixpoint theory. This research
direction certainly deserves further investigation.

It is our belief however that the most rigorous approach to extensionality for higher-order
logic programs with negation, is through the use of the infinite-valued approach. It is worth
noting that recently, some advantages of the infinite-valued approach versus the well-founded one,
were identified in a different context. More specifically, as it was recently demonstrated
in~\cite{Esik15,CarayolE16}, the infinite-valued approach satisfies all identities of
{\em iteration theories}~\cite{BloomE93}, while the well-founded semantics does not.
Since iteration theories (intuitively) provide an abstract framework for the evaluation of the merits
of various semantic approaches for languages that involve recursion, the results just mentioned give
an extra incentive for the further study and use of the infinite-valued approach.

%%%%%%%%%%%%%%%%%%%%%%%%%%%%%%%%%%%%%%%%%%%%%%%%%%%%%%%%%%%%%BIBLIOGRAPHY
%\bibliography{iclp2017}

%%%%%%%%%%%%%%%%%%%%%%%%%%%%%%%%%%%%%%%%%%%%%%%%%%%%%%%%%%%%%APPENDICES
\appendix
\section{Proof of Theorem~\ref{thr:min_mod}}\label{app:min_mod}

%%%%%%%%%%%%%%%%%%%%%%%%%%%%%%%%%%%%%%%%%%%%%%%%%%
Recall that propositional programs consist
of clauses of the form $\mathsf{p}\leftarrow \mathsf{L}_1,\ldots,\mathsf{L}_n$, where each $\mathsf{L}_i$
is either a propositional variable or the negation of a propositional variable; the $\mathsf{L}_i$ are called {\em literals}, {\em negative} if they have negation and {\em positive} otherwise. As hinted in Section~\ref{intuitive}, we need to consider propositional programs with a possibly countably infinite number of clauses, as this is the case with the ground instantiation of a higher-order program. Moreover, we must allow a positive literal $\mathsf{L}_i$ to also be one of the constants $\mathsf{true}$ and $\mathsf{false}$. The reason for this is that in the ground instantiation of an $\mathcal{H}$ program, there may exist ground expressions of the form $(\mathsf{E}_1\approx \mathsf{E}_2)$ in the bodies of clauses. These have specific meanings under the semantics of Section~\ref{semantics_of_language} and can not be treated as propositional variables. In the case where the two expressions $\mathsf{E}_1$ and $\mathsf{E}_2$ are syntactically identical, the expression $(\mathsf{E}_1\approx \mathsf{E}_2)$ will be treated as the constant $\mathsf{true}$ (i.e., it is assumed that $I((\mathsf{E}_1\approx \mathsf{E}_2))=\mathit{true}$ for every interpretation $I$ of the ground instantiation), and otherwise as the constant $\mathsf{false}$.

We use, throughout all sections of the Appendix, the standard representation of partial interpretations of propositional programs by $\langle T, F\rangle$, where $T$ and $F$ are disjoint subsets of the Herbrand base $B_\mathsf{P}$ of a propositional program $\mathsf{P}$ (i.e., the set of propositional variables appearing in $\mathsf{P}$) denoting the sets of propositional variables considered to be true and false, respectively, in the interpretation. Naturally, $\langle T, F\rangle$ is a total interpretation if $T\cup F = B_\mathsf{P}$.
The truth ordering $\tleq$ and Fitting ordering $\fleq$ of interpretations can be defined in two equivalent ways:
\begin{definition}
If $I=\langle T, F\rangle$ and $I'=\langle T', F'\rangle$ are two partial interpretations of a propositional program $\mathsf{P}$ then we say that $I\tleq I'$ if $T\subseteq T'$ and $F'\subseteq F$, or, equivalently, if $I(\mathsf{p})\tleq I'(\mathsf{p})$ for every propositional variable $\mathsf{p}$ of $\mathsf{P}$. Moreover, we say that $I\fleq I'$ if $T\subseteq T'$ and $F\subseteq F'$, or, equivalently, if $I(\mathsf{p})\fleq I'(\mathsf{p})$ for every propositional variable $\mathsf{p}$ of $\mathsf{P}$.
\end{definition}
A model $M$ of a propositional program is as usual considered to be $\tleq$-minimal (respectively, $\fleq$-minimal) if there does not exist a different model $N$ of $\mathsf{P}$, such that $N\tleq M$ (respectively, $N\fleq M$).
\begin{oneshot}{Theorem~\ref{thr:min_mod}}
Let $\mathsf{P}$ be a program and let $\mathsf{Gr(P)}$ be its ground instantiation. Also let $M$ be a partial interpretation of $\mathsf{Gr(P)}$ and let $\mathcal{M}$ be the Herbrand interpretation of $\mathsf{P}$, such that $\vwrt{\mathsf{A}}{\mathcal{M}} = M(\mathsf{A})$ for every $\mathsf{A}\in U_{\mathsf{P},o}$.
Then, $\mathcal{M}$ is a Herbrand model of $\mathsf{P}$ if and only if $M$ is a model of $\mathsf{Gr(P)}$. Moreover, $\mathcal{M}$ is $\tleq$-minimal (respectively, $\fleq$-minimal) if and only if $M$ is $\tleq$-minimal (respectively, $\fleq$-minimal).
\end{oneshot}
\begin{proof*}
\begin{description}
\item[Step 1] $M$ is a model of $\mathsf{Gr(P)}\Rightarrow\mathcal{M}$ is a Herbrand model of $\mathsf{P}$:
For every Herbrand state $s$ of $\mathsf{P}$ there exists a ground substitution $\theta$ such that $\theta(\mathsf{V}) = s(\mathsf{V})$, and therefore $s(\mathsf{V}) = \mwrs{\theta(\mathsf{V})}{\mathcal{M}}{s'}$, for all states $s'$ and variables $\mathsf{V}$ in $\mathsf{P}$. Also, for every clause $\mathsf{A} \leftarrow \mathsf{L}_1, \ldots, \mathsf{L}_m$ in $\mathsf{P}$ there exists a respective ground instance $\mathsf{A}\theta \leftarrow \mathsf{L}_1\theta, \ldots, \mathsf{L}_m\theta$ in $\mathsf{Gr(P)}$. As $M$ is a model of $\mathsf{Gr(P)}$, $M(\mathsf{A}\theta)\geq min\{M(\mathsf{L}_1\theta), \ldots, M(\mathsf{L}_m\theta)\}$. By assumption, $\vwrt{\mathsf{A}\theta}{\mathcal{M}}= M(\mathsf{A}\theta)$ and $\vwrt{\mathsf{L}_i\theta}{\mathcal{M}}= M(\mathsf{L}_i\theta)$ for all $i\leq m$. Moreover, it is easy to see (by a trivial induction on the structure of the expression) that $\mathsf{A}\theta= \mwrs{\mathsf{A}}{\mathcal{M}}{s}$, which implies that $\vwrt{\mathsf{A}\theta}{\mathcal{M}}= \vwrt{\mwrs{\mathsf{A}}{\mathcal{M}}{s}}{\mathcal{M}}$. Similarly, $\vwrt{\mathsf{L}_i\theta}{\mathcal{M}}= \vwrt{\mwrs{\mathsf{L}_i}{\mathcal{M}}{s}}{\mathcal{M}}$, for all $i\leq m$. It follows immediately that $\vwrt{\mwrs{\mathsf{A}}{\mathcal{M}}{s}}{\mathcal{M}}\geq min\{\vwrt{\mwrs{\mathsf{L}_1}{\mathcal{M}}{s}}{\mathcal{M}}, \ldots, \vwrt{\mwrs{\mathsf{L}_m}{\mathcal{M}}{s}}{\mathcal{M}}\}$, which implies that $\mathcal{M}$ is a model of $\mathsf{P}$.
\item[Step 2]
$\mathcal{M}$ is a Herbrand model of $\mathsf{P}\Rightarrow M$ is a model of $\mathsf{Gr(P)}$:
Every clause in $\mathsf{Gr(P)}$ is a ground instance of a clause $\mathsf{A} \leftarrow \mathsf{L}_1, \ldots, \mathsf{L}_m$ in $\mathsf{P}$ and is therefore of the form $\mathsf{A}\theta \leftarrow \mathsf{L}_1\theta, \ldots, \mathsf{L}_m\theta$ for some ground substitution $\theta$. Consider a Herbrand state $s$, such that $s(\mathsf{V}) = \theta(\mathsf{V})$ for every variable $\mathsf{V}$ in $\mathsf{P}$. Because $\mathcal{M}$ is a model of $\mathsf{P}$, we have that $\vwrt{\mwrs{\mathsf{A}}{\mathcal{M}}{s}}{\mathcal{M}}\geq min\{\vwrt{\mwrs{\mathsf{L}_1}{\mathcal{M}}{s}}{\mathcal{M}}, \ldots, \vwrt{\mwrs{\mathsf{L}_m}{\mathcal{M}}{s}}{\mathcal{M}}\}$. Again, it is easy to see that $\mathsf{A}\theta= \mwrs{\mathsf{A}}{\mathcal{M}}{s}$ and therefore $\vwrt{\mathsf{A}\theta}{\mathcal{M}}= \vwrt{\mwrs{\mathsf{A}}{\mathcal{M}}{s}}{\mathcal{M}}$. Similarly, $\vwrt{\mathsf{L}_i\theta}{\mathcal{M}}= \vwrt{\mwrs{\mathsf{L}_i}{\mathcal{M}}{s}}{\mathcal{M}}$ for all $i\leq m$. Additionally, $\vwrt{\mathsf{A}\theta}{\mathcal{M}}= M(\mathsf{A}\theta)$ and $\vwrt{\mathsf{L}_i\theta}{\mathcal{M}}= M(\mathsf{L}_i\theta)$ for all $i\leq m$, so $M(\mathsf{A}\theta)\geq min\{M(\mathsf{L}_1\theta), \ldots, M(\mathsf{L}_m\theta)\}$, which implies that $M$ is a model of $\mathsf{Gr(P)}$.
\item[Step 3]
$M$ is minimal $\Rightarrow\mathcal{M}$ is minimal:
Assume there exists a model $\mathcal{N}$ of $\mathsf{P}$, distinct from $\mathcal{M}$, such that $\mathcal{N}\tleq \mathcal{M}$ (respectively, $\mathcal{N}\fleq \mathcal{M}$).
%This means there exists an ordinal $\alpha$ such that $\mathcal{M} \ale[\alpha] \wfmp$.
Then we can construct an interpretation $N$ for $\mathsf{Gr(P)}$ such that for every ground atom $\mathsf{A}$, $N(\mathsf{A}) = \vwrt{\mathsf{A}}{\mathcal{N}}$. It is obvious that $N\tleq M$ (respectively, $N\fleq M$), since $N(\mathsf{A}) = \vwrt{\mathsf{A}}{\mathcal{N}}\tleq\vwrt{\mathsf{A}}{\mathcal{M}} = M(\mathsf{A})$ (respectively,  $\vwrt{\mathsf{A}}{\mathcal{N}}\fleq\vwrt{\mathsf{A}}{\mathcal{M}}$).
Also, $N$ is distinct from $M$, since $N(\mathsf{B}) = \vwrt{\mathsf{B}}{\mathcal{N}} \neq \vwrt{\mathsf{B}}{\mathcal{M}} = M(\mathsf{B})$ for at least one ground atom $\mathsf{B}$. As we showed in Step 2, the fact that $\mathcal{N}$ is a model of $\mathsf{P}$ implies that $N$ is a model of $\mathsf{Gr(P)}$, which is of course a contradiction, since $M$ is a $\tleq$-minimal (respectively, $\fleq$-minimal) model of $\mathsf{Gr(P)}$. Therefore, $\mathcal{M}$ must be a $\tleq$-minimal (respectively, $\fleq$-minimal) model of $\mathsf{P}$.
\item[Step 4]
$\mathcal{M}$ is minimal $\Rightarrow M$ is minimal:
By the reverse of the argument used in Step 3:
Assume there exists a model $N$ of $\mathsf{Gr(P)}$, distinct from $M$, such that $N\tleq M$ (respectively, $N\fleq M$).
Then we can construct an interpretation $\mathcal{N}$ for $\mathsf{P}$ such that for every ground atom $\mathsf{A}$, $N(\mathsf{A}) = \vwrt{\mathsf{A}}{\mathcal{N}}$. It is obvious that $\mathcal{N}\tleq \mathcal{M}$ (respectively, $\mathcal{N}\fleq \mathcal{M}$), since $\vwrt{\mathsf{A}}{\mathcal{N}}=N(\mathsf{A})\tleq M(\mathsf{A})=\vwrt{\mathsf{A}}{\mathcal{M}}$ (respectively, $N(\mathsf{A}) \fleq M(\mathsf{A})$).
Also, $\mathcal{N}$ is distinct from $\mathcal{M}$, since their valuation functions are distinct. As we showed in Step 1, the fact that $N$ is a model of $\mathsf{Gr(P)}$ implies that $\mathcal{N}$ is a model of $\mathsf{P}$, which is of course a contradiction, since $\mathcal{M}$ is a $\tleq$-minimal (respectively, $\fleq$-minimal) model of $\mathsf{P}$. Therefore, $M$ must be a $\tleq$-minimal (respectively, $\fleq$-minimal) model of $\mathsf{Gr(P)}$.\mathproofbox
\end{description}
\end{proof*}

\section{Proof of Lemma~\ref{lm:exm_is_not_extens}}\label{app:counterexample}
For the proof of Lemma~\ref{lm:exm_is_not_extens}, which we present in this appendix, we rely on the method of~\cite{PrzP90} for the construction of the well-founded model. We first give the necessary definitions from~\cite{PrzP90}.
\begin{definition}
Let $\mathsf{P}$ be a propositional program and let $J$ be an interpretation of $\mathsf{P}$. The operator $\Theta_J(\cdot)$ on the set of interpretations of $\mathsf{P}$ is defined as follows: for every interpretation $I$ and every propositional variable $\mathsf{p}$ of $\mathsf{P}$,
\[
\Theta_J(I)(\mathsf{p}) = \begin{cases}
\mathit{true}, & \mbox{there exists a clause } \mathsf{p}\leftarrow \mathsf{L}_1,\ldots,\mathsf{L}_n \mbox{ in } \mathsf{P} \mbox{ s.t. for all } i\leq n,\\
& \mbox{either } J(\mathsf{L}_i)= \mathit{true} \mbox{ or } \mathsf{L}_i \mbox{ is a positive literal and } I(\mathsf{L}_i)=\mathit{true};\\
\mathit{false}, & \mbox{for all clauses } \mathsf{p}\leftarrow \mathsf{L}_1,\ldots,\mathsf{L}_n \mbox{ in } \mathsf{P} \mbox{ there exists an } i\leq n, \mbox{ s.t.}\\
& \mbox{either } J(\mathsf{L}_i)= \mathit{false} \mbox{ or } \mathsf{L}_i \mbox{ is a positive literal and } I(\mathsf{L}_i)=\mathit{false};\\
\unk, & \mbox{otherwise}.

\end{cases}
\]
Moreover we define the following sequence of interpretations:
\[
\begin{array}{lll}
\Theta_J^{\uparrow 0} &  = & \langle T_0, F_0\rangle=\langle \emptyset, B_\mathsf{P}\rangle\\
\Theta_J^{\uparrow (n+1)} &  = & \langle T_{n+1}, F_{n+1}\rangle=\Theta_J(\Theta_J^{\uparrow n})\\
\Theta_J^{\uparrow \omega} &  = & \langle T_{\omega}, F_{\omega}\rangle=\langle\bigcup_{n<\omega} T_\beta,\bigcap_{n<\omega} F_\beta\rangle
\end{array}
\]
\end{definition}
It is shown in~\cite{PrzP90} that, for any interpretation $J$, the operator $\Theta_J$ has a unique least fixed-point given by $\Theta_J^{\uparrow \omega}$.
\begin{definition}
Let $\mathsf{P}$ be a propositional program. For every countable ordinal $\alpha\leq\gamma$, we define the interpretation $M_\alpha$ as follows:
\[
\begin{array}{lll}
M_0 &  = & \langle T_0, F_0\rangle=\langle \emptyset, \emptyset\rangle\\
M_{\alpha+1} &  = & \langle T_{\alpha+1}, F_{\alpha+1}\rangle=\Theta_{M_\alpha}^{\uparrow \omega}, \mbox{ for a successor ordinal } \alpha+1\\
M_\alpha &  = & \langle T_{\alpha}, F_{\alpha}\rangle=\langle\bigcup_{\beta<\alpha} T_\beta,\bigcup_{\beta<\alpha} F_\beta\rangle, \mbox{ for a limit ordinal } \alpha
\end{array}
\]
\end{definition}
Again from~\cite{PrzP90}, there exists the least countable ordinal $\lambda$, such that $M_{\lambda}=\Theta_{M_{\lambda}}^{\uparrow \omega}$ and $M_{\lambda}$ coincides with the well-founded model $M_\mathsf{P}$ of the propositional program $\mathsf{P}$.
We now present the proof of Lemma~\ref{lm:exm_is_not_extens}.
\begin{oneshot}{Lemma~\ref{lm:exm_is_not_extens}}
The Herbrand interpretation $\wfmp$ of the program of Example~\ref{exm:not_extens} is not extensional.
\end{oneshot}
\begin{proof}
We repeat here the program of Example~\ref{exm:not_extens} for the reader's convenience:
\[
\begin{array}{l}
\mbox{\tt s Q $\leftarrow$ Q (s Q)}\\
\mbox{\tt p R $\leftarrow$ R}\\
\mbox{\tt q R $\leftarrow$ $\pnot$(w R)}\\
\mbox{\tt w R $\leftarrow$ $\pnot$R}
\end{array}
\]
%\[
%\begin{array}{l}
%\mathsf{s}\,\mathsf{Q}\leftarrow \mathsf{Q}\,(\mathsf{s}\,\mathsf{Q})\\
%\mathsf{p}\,\mathsf{R}\leftarrow \mathsf{R}\\
%\mathsf{q}\,\mathsf{R}\leftarrow \pnot \mathsf{w}\,\mathsf{R}\\
%\mathsf{w}\,\mathsf{R}\leftarrow \pnot \mathsf{R}
%\end{array}
%\]
Recall that the predicate variable {\tt Q} is of type $o\rightarrow o$ and the predicate variable {\tt R}
is of type $o$. The ground instantiation of the above program is infinite, as so:
%\[
%\begin{array}{l}
%\mathsf{s}\,\mathsf{p}\leftarrow\mathsf{p}\,(\mathsf{s}\,\mathsf{p})\\
%\mathsf{s}\,\mathsf{q}\leftarrow\mathsf{q}\,(\mathsf{s}\,\mathsf{q})\\
%\mathsf{s}\,\mathsf{w}\leftarrow\mathsf{w}\,(\mathsf{s}\,\mathsf{w})\\
%\mathsf{p}\,(\mathsf{s}\,\mathsf{p})\leftarrow\mathsf{s}\,\mathsf{p}\\
%\mathsf{p}\,(\mathsf{s}\,\mathsf{q})\leftarrow\mathsf{s}\,\mathsf{q}\\
%\mathsf{p}\,(\mathsf{s}\,\mathsf{w})\leftarrow\mathsf{s}\,\mathsf{w}\\
%\mathsf{q}\,(\mathsf{s}\,\mathsf{p})\leftarrow\pnot \mathsf{w}\,(\mathsf{s}\,\mathsf{p})\\
%\mathsf{w}\,(\mathsf{s}\,\mathsf{p})\leftarrow\pnot \mathsf{s}\,\mathsf{p}\\
%\mathsf{q}\,(\mathsf{s}\,\mathsf{q})\leftarrow\pnot \mathsf{w}\,(\mathsf{s}\,\mathsf{q})\\
%\mathsf{w}\,(\mathsf{s}\,\mathsf{q})\leftarrow\pnot \mathsf{s}\,\mathsf{q}\\
%\mathsf{q}\,(\mathsf{s}\,\mathsf{w})\leftarrow\pnot \mathsf{w}\,(\mathsf{s}\,\mathsf{w})\\
%\mathsf{w}\,(\mathsf{s}\,\mathsf{w})\leftarrow\pnot \mathsf{s}\,\mathsf{w}\\
%\ldots
%\end{array}
%\]
%
\[
\begin{array}{l}
\mbox{\tt s p $\leftarrow$ p (s p)}\\
\mbox{\tt s q $\leftarrow$ q (s q)}\\
\mbox{\tt s w $\leftarrow$ w (s w)}\\
\mbox{\tt p (s p) $\leftarrow$ (s p)}\\
\mbox{\tt p (s q) $\leftarrow$ (s q)}\\
\mbox{\tt p (s w) $\leftarrow$ (s w)}\\
\mbox{\tt q (s p) $\leftarrow \pnot$(w (s p))}\\
\mbox{\tt w (s p) $\leftarrow \pnot$(s p)}\\
\mbox{\tt q (s q) $\leftarrow \pnot$(w (s q))}\\
\mbox{\tt w (s q) $\leftarrow \pnot$(s q)}\\
\mbox{\tt q (s w) $\leftarrow \pnot$(w (s w))}\\
\mbox{\tt w (s w) $\leftarrow \pnot$(s w)}\\
\ldots
\end{array}
\]
The well-founded model $\wgp$ of the above (infinite) propositional program is the valuation function of $\wfmp$. It has already been argued in Section~\ref{well_founded} and Section~\ref{restrictions}, that $\wgp(\mbox{\tt s p})\neq\wgp(\mbox{\tt s q})$ and this should be obvious to the reader who is familiar with the well-founded model semantics; however, for reasons of completeness, we present a formal argument in the second part of this proof. In the same sections, we claimed that this is despite the fact that $\mbox{\tt p} \exeq[\wgp,o\rightarrow o] \mbox{\tt q}$, which we will immediately proceed to prove. Of course, by Definitions~\ref{def:extensional_eq} and~\ref{def:extensionality}, the facts that $\wgp(\mbox{\tt s p})\neq\wgp(\mbox{\tt s q})$ and $\mbox{\tt p} \exeq[\wgp,o\rightarrow o] \mbox{\tt q}$, render $\wfmp$ not extensional.

First, we show that $\mbox{\tt p} \exeq[\wgp,o\rightarrow o] \mbox{\tt q}$, i.e. that for all $\mathsf{A}, \mathsf{A'}\in U_{\mathsf{P},o}$ such that $\mathsf{A}\exeq[\wgp,o]\mathsf{A'}$, $\mbox{\tt p } \mathsf{A}\exeq[\wgp,o] \mbox{\tt q } \mathsf{A'}$ holds. By definition $\wgp$ is a fixed-point of the operator $\Theta_{\wgp}(\cdot)$, therefore for any ground atom $\mathsf{B}$ we have that $\wgp(\mathsf{B})$ equals to $\mathit{true}$, if there exists a clause $\mathsf{B}\leftarrow \mathsf{L}_1, \ldots, \mathsf{L}_n$ in $\mathsf{Gr(P)}$, such that $\wgp(\mathsf{L}_i)=\mathit{true}$ for all $i\leq n$; it equals to $\mathit{false}$ if for every clause $\mathsf{B}\leftarrow \mathsf{L}_1, \ldots, \mathsf{L}_n$ in $\mathsf{Gr(P)}$, we have that $\wgp(\mathsf{L}_i)=\mathit{false}$ for at least one $i\leq n$; and it equals to $\unk$ otherwise. Observe that there exists only one clause in $\mathsf{Gr(P)}$ such that $\mbox{\tt p }\mathsf{A}$ is the head of the clause, in particular it is the ground instance $\mbox{\tt p }\mathsf{A}\leftarrow \mathsf{A}$ of the clause $\mbox{\tt p R} \leftarrow \mbox{\tt R}$ of $\mathsf{P}$. This suggests that $\wgp(\mbox{\tt p }\mathsf{A})=\wgp(\mathsf{A})$~(1). Similarly, the ground instance $\mbox{\tt q }\mathsf{A'}\leftarrow\pnot\mbox{\tt (w }\mathsf{A'}\mbox{\tt )}$ of the clause $\mbox{\tt q R}\leftarrow\pnot\mbox{\tt (w R)}$ is the only clause in $\mathsf{Gr(P)}$ with $\mbox{\tt q }\mathsf{A'}$ as its head atom and from this we can infer that $\wgp(\mbox{\tt q }\mathsf{A'})=\wgp(\pnot\mbox{\tt (w }\mathsf{A'}\mbox{\tt )})=\lnot\wgp(\mbox{\tt w }\mathsf{A'})$~(2). Finally, the only clause in $\mathsf{Gr(P)}$, such that $\mbox{\tt w }\mathsf{A'}$ is the head of the clause, is the ground instance $\mbox{\tt w }\mathsf{A'}\leftarrow\pnot\mathsf{A'}$ of the clause $\mbox{\tt w R}\leftarrow\pnot\mbox{\tt R}$ of $\mathsf{P}$, which implies that $\wgp(\mbox{\tt w }\mathsf{A'})=\wgp(\pnot\mathsf{A'})=\lnot\wgp(\mathsf{A'})$~(3). By (2) and (3) we have $\wgp(\mbox{\tt q }\mathsf{A'})=\wgp(\mathsf{A'})$, and, in conjunction with (1), that $\wgp(\mbox{\tt p }\mathsf{A})=\wgp(\mbox{\tt q }\mathsf{A'})$, because $\mathsf{A}\exeq[\wgp,o]\mathsf{A'}$ implies, by Definition~\ref{def:extensional_eq}, that $\wgp(\mathsf{A})=\wgp(\mathsf{A'})$. Therefore, we also have $\mbox{\tt p }\mathsf{A}\exeq[\wgp,o]\mbox{\tt q }\mathsf{A'}$ and, in consequence, $\mbox{\tt p}\exeq[\wgp,o\rightarrow o]\mbox{\tt q}$.

For the second part, we show that $\wgp(\mbox{\tt s p})\neq\wgp(\mbox{\tt s q})$. We do this in two steps; first we show that $\wgp(\mbox{\tt s p})=\mathit{false}$ and then that $\wgp(\mbox{\tt s q})=\unk$.

For the first step, it suffices to show that ${M_1}(\mbox{\tt s p})=\Theta^{\uparrow \omega}_{M_0}(\mbox{\tt s p})=\mathit{false}$. For this, we prove that $\Theta^{\uparrow n}_{M_0}(\mbox{\tt s p})=\mathit{false}$ and $\Theta^{\uparrow n}_{M_0}(\mbox{\tt p (s  p)})=\mathit{false}$, for all $n<\omega$, by an induction on $n$. The basis case is trivial, as $\Theta^{\uparrow 0}_{M_0}=\langle\emptyset,U_{\mathsf{P},o}\rangle$, assigns the value $\mathit{false}$ to every atom. For the induction step, we show the statement for $n+1$ assuming that it holds for $n$. We see that there exists only one clause in $\mathsf{Gr(P)}$ such that {\tt s p} is the head of the clause; this is the ground instance $\mbox{\tt s p}\leftarrow \mbox{\tt p (s p)}$ of the clause $\mbox{\tt s Q}\leftarrow \mbox{\tt Q (s Q)}$ of $\mathsf{P}$. By the induction hypothesis, we have that $\Theta^{\uparrow n}_{M_0}(\mbox{\tt p (s p)})=\mathit{false}$, therefore $\Theta^{\uparrow (n+1)}_{M_0}(\mbox{\tt s p})=\mathit{false}$. Similarly, the only clause in $\mathsf{Gr(P)}$ with {\tt p (s p)} as the head of the clause is the ground instance $\mbox{\tt p (s p)}\leftarrow \mbox{\tt (s p)}$ of the clause $\mbox{\tt p R}\leftarrow \mbox{\tt R}$ of $\mathsf{P}$. By the induction hypothesis, we have that $\Theta^{\uparrow n}_{M_0}(\mbox{\tt s p})=\mathit{false}$, therefore $\Theta^{\uparrow (n+1)}_{M_0}(\mbox{\tt p (s p)})=\mathit{false}$.

For the second step, we perform an induction on $\alpha$, during which we simultaneously show that ${M_\alpha}(\mbox{\tt s q})=\unk$, ${M_\alpha}(\mbox{\tt q (s q)})=\unk$ and ${M_\alpha}(\mbox{\tt w (s q)})=\unk$, for all countable ordinals $\alpha$. The basis case is trivial, as $M_0=\langle\emptyset,\emptyset\rangle$ assigns the value $\unk$ to all atoms. For the induction step, we first prove the statement for a successor ordinal $\alpha+1$, assuming that it holds for all countable ordinals up to $\alpha$. Indeed, there exists exactly one clause in $\mathsf{Gr(P)}$ with {\tt w (s q)} as its head atom, in particular the ground instance $\mbox{\tt w (s q)}\leftarrow\pnot\mbox{\tt (s q)}$ of the clause $\mbox{\tt q R}\leftarrow\pnot\mbox{\tt (w R)}$. As $\pnot\mbox{\tt (s q)}$ is a negative literal, for every $n<\omega$ the value of $\Theta_{M_\alpha}^{\uparrow (n+1)}(\mbox{\tt w (s q)})$ is defined by $M_\alpha(\pnot\mbox{\tt (s q)})$. By the induction hypothesis, we have $M_\alpha(\mbox{\tt s q})=M_\alpha(\pnot\mbox{\tt (s q)})=\unk$, therefore it follows that $\Theta^{\uparrow (n+1)}_{M_\alpha}(\mbox{\tt w (s q)})=\unk$ and, because this holds for every $n<\omega$, that ${M_{\alpha+1}}(\mbox{\tt w (s q)})=\unk$. Moreover, the ground instance $\mbox{\tt q (s q)}\leftarrow\pnot\mbox{\tt (w (s q))}$ of the clause $\mbox{\tt q R}\leftarrow\pnot\mbox{\tt (w R)}$ is the only clause in $\mathsf{Gr(P)}$ with {\tt q (s q)} as its head atom. Again, $\pnot\mbox{\tt (w (s q))}$ is a negative literal and so for every $n<\omega$ the value of $\Theta^{\uparrow (n+1)}_{M_\alpha}(\mbox{\tt q (s q)})$ only depends on $M_\alpha(\pnot\mbox{\tt (w (s q))})$. By the induction hypothesis, we have $M_\alpha(\mbox{\tt w (s q)})=M_\alpha(\pnot\mbox{\tt (w (s q))})=\unk$, therefore it follows that $\Theta^{\uparrow (n+1)}_{M_\alpha}(\mbox{\tt q (s q)})=\unk$. Since this holds for every $n<\omega$, we also have that ${M_{\alpha+1}}(\mbox{\tt q (s q)})=\unk$. Finally,  there exists only one clause in $\mathsf{Gr(P)}$ such that {\tt s q} is the head of the clause, in particular the ground instance $\mbox{\tt s q}\leftarrow \mbox{\tt q (s q)}$ of the clause $\mbox{\tt s Q}\leftarrow \mbox{\tt Q (s Q)}$ of $\mathsf{P}$. We have already shown that $\Theta^{\uparrow (n+1)}_{M_\alpha}(\mbox{\tt q (s q)})=\unk$ for all $n<\omega$; moreover, by the induction hypothesis, ${M_\alpha}(\mbox{\tt q (s q)})=\unk$. Consequently, for all $n<\omega$, $\Theta^{\uparrow (n+2)}_{M_\alpha}(\mbox{\tt s q})=\unk$ and thus $M_{\alpha+1}(\mbox{\tt s q})=\unk$. It remains to show ${M_\alpha}(\mbox{\tt s q})=\unk$, ${M_\alpha}(\mbox{\tt q (s q)})=\unk$ and ${M_\alpha}(\mbox{\tt w (s q)})=\unk$ for a limit ordinal $\alpha$. In this case, we have that $M_\alpha=\langle\bigcup_{\beta<\alpha} T_\beta,\bigcup_{\beta<\alpha} F_\beta\rangle$. By the induction hypothesis, ${M_\beta}(\mbox{\tt s q})=\unk$ for all $\beta<\alpha$, which means that $\mbox{\tt s q}\not\in T_\beta$ and $\mbox{\tt s q}\not\in F_\beta$. In other words, $\mbox{\tt s q}\not\in \bigcup_{\beta<\alpha} T_\beta$ and $\mbox{\tt s q}\not\in \bigcup_{\beta<\alpha} F_\beta$, therefore $M_\alpha(\mbox{\tt s q})=\unk$. In the same way we can show that ${M_\alpha}(\mbox{\tt q (s q)})=\unk$ and ${M_\alpha}(\mbox{\tt w (s q)})=\unk$. This concludes the induction and so we have proven that $\wgp(\mbox{\tt s q})=\unk$.
\end{proof}

\section{Proof of Theorem~\ref{thr:strat_extens}}\label{app:stratified}
%
%Since the set of predicate constants that appear in a program $\mathsf{P}$ is finite, and since the number of predicate constants of the program that have a greater or equal type than the type of a given predicate variable is also finite, it follows that checking whether a given program is stratified, is decidable.
Before we proceed with the proof of Theorem~\ref{thr:strat_extens}, we recall some necessary definitions. Note that the proof makes use of a fixed-point characterization of the perfect model semantics given in~\cite{PrzP90}, rather than the more traditional definition of~\cite{Prz88}. The following definition of the local stratification of possibly infinite propositional programs is adapted to allow for the presence of expressions of the form $(\mathsf{E}_1\approx\mathsf{E}_2)$.
\begin{definition}
A propositional program $\mathsf{P}$ is called {\em locally stratified} if and only if it is possible to decompose the
Herbrand base $B_{\mathsf{P}}$ of $\mathsf{P}$ into disjoint sets (called {\em strata})
$S_1,S_2,\ldots,S_\alpha,\ldots, \alpha<\gamma$, where $\gamma$ is a countable ordinal, such that
for every clause $\mathsf{H} \leftarrow \mathsf{A}_1,\ldots,\mathsf{A}_m,\pnot \mathsf{B}_1,\ldots,\pnot \mathsf{B}_n$
in $\mathsf{P}$, we have that for every $i\leq m$, $\textit{stratum}(\mathsf{A}_i) \leq \textit{stratum}(\mathsf{H})$ and
for every $i\leq n$, $\textit{stratum}(\mathsf{B}_i) < \textit{stratum}(\mathsf{H})$, where $\textit{stratum}$ is a function
such that $\textit{stratum}(\mathsf{C}) = \beta$,
if the propositional variable $\mathsf{C}\in B_{\mathsf{P}}$ belongs to $S_\beta$ and $\textit{stratum}(\mathsf{C}) = 0$, if  $\mathsf{C}\not\in B_{\mathsf{P}}$ and is a constant (equivalently, of the form $(\mathsf{E}_1\approx\mathsf{E}_2)$). Any decomposition of the described form is called a \emph{local stratification} of $\mathsf{P}$.
\end{definition}
In \cite{PrzP90}, for any interpretation $J$, the operator $\Psi_J$ is defined and shown to have a unique least fixed-point given by $\Psi_J^{\uparrow \omega}$ of the next definition. This is then used to give an iterated fixed-point characterization of the perfect model of a locally stratified program.
\begin{definition}
Let $\mathsf{P}$ be a propositional program and let $J$ be an interpretation of $\mathsf{P}$. The operator $\Psi_J : 2^{B_\mathsf{P}}\rightarrow 2^{B_\mathsf{P}}$ is defined as follows: for every $I\subseteq {B_\mathsf{P}}$,
$\Psi_J(I) = \{\mathsf{p}\in {B_\mathsf{P}}\mid \mbox{there exists a clause } \mathsf{p}\leftarrow \mathsf{L}_1,\ldots,\mathsf{L}_n \mbox{ in } \mathsf{P} \mbox{ such that, for all } i\leq n \mbox{, either } J(\mathsf{L}_i)= \mathit{true} \mbox{ or}$ $\mathsf{L}_i\in I \}$.
Moreover we define the following sequence:
\[
\begin{array}{lll}
\Psi_J^{\uparrow 0} &  = & \emptyset\\
\Psi_J^{\uparrow (n+1)} &  = & \Psi_J(\Psi_J^{\uparrow n})\\
\Psi_J^{\uparrow \omega} &  = & \bigcup_{n<\omega} \Psi_J^{\uparrow n}
\end{array}
\]
\end{definition}
We follow the usual convention of identifying a subset of the Herbrand base with a total (two-valued) interpretation of the program and use the two notions interchangeably. E.g., a set $\Psi_J^{\uparrow n}$ of the above sequence is considered to be equivalent to the interpretation $\langle \Psi_J^{\uparrow n},B_\mathsf{P}-\Psi_J^{\uparrow n}\rangle$ and $\mathsf{p}\in\Psi_J^{\uparrow n}$ is considered to be equivalent to $\Psi_J^{\uparrow n}(\mathsf{p})=\mathit{true}$.
Given a local stratification $S_1,S_2,\ldots,S_\alpha,\ldots$, $\alpha<\gamma$, of a propositional program $\mathsf{P}$, we define the sets $\mathcal{B}_\alpha = \bigcup_{\beta<\alpha} S_\beta$ for every countable ordinal $\alpha\leq\gamma$. Clearly, $B_\mathsf{P}=\mathcal{B}_\gamma$. Then the perfect model of $\mathsf{P}$ can be constructed~\cite{PrzP90} as the last interpretation $N_\gamma$ in an $\fleq$-increasing sequence of partial interpretations of $\mathsf{P}$:
\begin{definition}
Let $\mathsf{P}$ be a propositional program and let $S_1,S_2,\ldots,S_\alpha,\ldots, \alpha<\gamma$, where $\gamma$ is a countable ordinal, be a local stratification of $\mathsf{P}$. For every countable ordinal $\alpha\leq\gamma$, we define the interpretation $N_\alpha$ as follows:
\[
\begin{array}{lll}
N_0 &  = & \langle T_0, F_0\rangle=\langle \emptyset, \emptyset\rangle\\
N_{\alpha+1} &  = & \langle T_{\alpha+1}, F_{\alpha+1}\rangle=\langle\Psi_{N_\alpha}^{\uparrow \omega}, \mathcal{B}_{\alpha+1}-\Psi_{N_\alpha}^{\uparrow \omega}\rangle, \mbox{ for a successor ordinal } \alpha+1\\
N_\alpha &  = & \langle T_{\alpha}, F_{\alpha}\rangle=\langle\bigcup_{\beta<\alpha} T_\beta,\bigcup_{\beta<\alpha} F_\beta\rangle, \mbox{ for a limit ordinal } \alpha
\end{array}
\]
\end{definition}
\removebrackets\begin{theorem}[\cite{PrzP90}]\label{thr:perf_model_char}
Let $\mathsf{P}$ be a propositional program. The sequence $N_0,N_1,\ldots,N_\alpha,\ldots, N_\gamma$ is $\fleq$-increasing. Moreover, $N_\gamma$ coincides with the perfect model $N_\mathsf{P}$ of $\mathsf{P}$.
\end{theorem}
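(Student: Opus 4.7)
The plan is to prove the two assertions in order, first $\fleq$-monotonicity of the sequence and then coincidence with the perfect model. For the first part, I would proceed by transfinite induction on $\alpha$. The base case is trivial since $N_0 = \langle\emptyset,\emptyset\rangle$ is the $\fleq$-bottom interpretation. For a limit ordinal, the unionwise definition of $N_\alpha$ immediately gives $N_\beta \fleq N_\alpha$ for all $\beta<\alpha$, using the induction hypothesis applied to the ascending chain below. The successor case is where the real work lies, and I would isolate two auxiliary facts about $\Psi_J$: (i) for any fixed $I$, $\Psi_J(I)$ is monotone in $J$ with respect to $\fleq$, because whenever $J_1 \fleq J_2$ and $J_1(\mathsf{L}_i)=\mathit{true}$ we also have $J_2(\mathsf{L}_i)=\mathit{true}$, so any clause enabled under $J_1$ is enabled under $J_2$; and (ii) consequently $\Psi_{J_1}^{\uparrow\omega} \subseteq \Psi_{J_2}^{\uparrow\omega}$ whenever $J_1 \fleq J_2$.

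Using (ii) with the outer induction hypothesis $N_{\alpha-1} \fleq N_\alpha$, I get $T_\alpha = \Psi_{N_{\alpha-1}}^{\uparrow\omega} \subseteq \Psi_{N_\alpha}^{\uparrow\omega} = T_{\alpha+1}$. The dual inclusion $F_\alpha \subseteq F_{\alpha+1}$ needs the stratification: since $F_\alpha = \mathcal{B}_\alpha - \Psi_{N_{\alpha-1}}^{\uparrow\omega}$, I must show that atoms in $\mathcal{B}_\alpha \setminus \Psi_{N_{\alpha-1}}^{\uparrow\omega}$ remain outside $\Psi_{N_\alpha}^{\uparrow\omega}$. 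This follows from the stratification hypothesis: a clause with head $\mathsf{p}\in S_\beta$, $\beta < \alpha$, has its negative body literals in strata $<\beta$, hence in $\mathcal{B}_\alpha$ where $N_\alpha$ and $N_{\alpha-1}$ already agree (by a prior step of the outer induction), and its positive body literals, when themselves in $\mathcal{B}_\alpha$, are controlled by the same $\Psi$-iteration. A careful inner induction on the stages of the $\omega$-iteration of $\Psi_{N_\alpha}$ confirms that no new atom of $\mathcal{B}_\alpha$ enters, so the strata already stabilized in $N_\alpha$ remain stable in $N_{\alpha+1}$.

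For the second part I would first show that $N_\gamma$ is a total Herbrand interpretation (totality comes from $\mathcal{B}_\gamma = B_\mathsf{P}$ plus the explicit falsification step at successor ordinals) and a model of $\mathsf{P}$: any clause whose body is true under $N_\gamma$ has its head put into some $\Psi_{N_\alpha}^{\uparrow(n+1)}$ once $\alpha$ exceeds the maximum stratum appearing negatively in the body. Then I would prove minimality with respect to the stratification-based preference order that defines the perfect model, by a transfinite induction on strata: for any model $M$ of $\mathsf{P}$, I compare $M$ and $N_\gamma$ stratum by stratum, arguing that if they first differ on stratum $\beta$, then $M$ is forced by the clauses to assign the same truth values as $\Psi_{N_\beta}^{\uparrow\omega}$ on $S_\beta$ (given agreement below $\beta$), which means $M = N_\gamma$ on $\mathcal{B}_{\beta+1}$, contradicting the choice of $\beta$. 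Hence $N_\gamma = N_\mathsf{P}$.

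The main obstacle is the successor step in part one, specifically establishing $F_\alpha \subseteq F_{\alpha+1}$: one must delicately exploit the stratification so that the new choice of $J = N_\alpha$ does not disturb atoms whose status was sealed at earlier strata, while simultaneously allowing the $\Psi$-iteration to close off the current stratum $S_\alpha$. Getting this invariant right — that $N_{\alpha+1}$ agrees with $N_\alpha$ on $\mathcal{B}_\alpha$ and is total on $\mathcal{B}_{\alpha+1}$ — is the technical heart of the proof; once it is in place, both the $\fleq$-increase and the coincidence with the perfect model follow with moderate additional work.
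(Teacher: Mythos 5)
First, a point of reference: the paper offers no proof of this statement to compare against — the theorem is stated with the citation \cite{PrzP90} in its header and is imported wholesale as the known iterated-fixpoint characterization of the perfect model of a locally stratified program. Judged on its own, your treatment of the first assertion is essentially the standard argument and is sound: $\fleq$-monotonicity of $\Psi_J$ in $J$, the resulting inclusion $\Psi_{J_1}^{\uparrow\omega}\subseteq\Psi_{J_2}^{\uparrow\omega}$, and the stratification-based stability of the false part are the right ingredients. Two small repairs: your successor step writes $T_\alpha=\Psi_{N_{\alpha-1}}^{\uparrow\omega}$, which presupposes that $\alpha$ is itself a successor (the cases $\alpha=0$ and $\alpha$ a limit need the union form of $T_\alpha$); and $N_\alpha$ and $N_{\alpha-1}$ agree only on $\mathcal{B}_{\alpha-1}$, not on all of $\mathcal{B}_\alpha$ — what saves the argument is that the negative literals of a clause whose head lies in $\mathcal{B}_\alpha$ sit strictly below the head's stratum and hence inside $\mathcal{B}_{\alpha-1}$.

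The genuine gap is in the second assertion. You argue that for \emph{any} model $M$ of $\mathsf{P}$, a first stratum $\beta$ of disagreement with $N_\gamma$ yields a contradiction because ``$M$ is forced by the clauses to assign the same truth values as $\Psi_{N_\beta}^{\uparrow\omega}$ on $S_\beta$.'' That would prove $N_\gamma$ is the \emph{unique} model of $\mathsf{P}$, which is false: for the one-clause program $\mathsf{p}\leftarrow\mathsf{p}$, the interpretation making $\mathsf{p}$ true is a model distinct from $N_\gamma$. A model agreeing with $N_\gamma$ below stratum $\beta$ is only forced to \emph{contain} the true atoms that $N_\gamma$ places in $S_\beta$ (the least-fixpoint iteration gives a lower bound, not an exact value); it may make strictly more atoms of $S_\beta$ true. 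To close the argument you must actually deploy the priority-based preference relation that \emph{defines} perfect models: since every atom in $N_\gamma\setminus M$ then lies in a stratum above $\beta$ while $M\setminus N_\gamma$ contains an atom of stratum $\beta$ (of higher priority), $N_\gamma$ is preferable to $M$; hence no model other than $N_\gamma$ can be perfect, and together with your check that $N_\gamma$ is a total model of $\mathsf{P}$ this gives $N_\gamma=N_\mathsf{P}$. Without that step, announced in your sketch but not used, the coincidence claim does not follow.
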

The next lemma and its following corollary are the basis for our proof of Theorem~\ref{thr:strat_extens}.
\begin{lemma}\label{lem:strat_to_localstr}
Let $\mathsf{P}$ be an $\mathcal{H}$ program. If $\mathsf{P}$ is stratified then the ground instantiation $\mathsf{Gr(P)}$ of $\mathsf{P}$ is locally stratified.
\end{lemma}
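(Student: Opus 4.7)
The plan is to lift the given stratification $S_1, \ldots, S_r$ of the predicate constants of $\mathsf{P}$ directly into a local stratification of $\mathsf{Gr(P)}$. Specifically, I would define $r$ strata $S_1', \ldots, S_r'$ of $\mathsf{Gr(P)}$ by placing each ground atom $\mathsf{A} \in U_{\mathsf{P},o}$ into the stratum $S_i'$ with $i = \textit{stratum}(\mathsf{q})$, where $\mathsf{q}$ is the leftmost predicate constant of $\mathsf{A}$ (its \emph{head}). A small preliminary induction on term structure shows that every ground atom does have such a head predicate constant and that this constant occurs in $\mathsf{P}$, which makes the definition well-posed.

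The key technical observation that makes this lifting work is a \emph{type-counting} fact: if $\mathsf{t}$ is a ground term of predicate type $\pi$ and the head of $\mathsf{t}$ is the predicate constant $\mathsf{q}$, then $\mathsf{t} = (\mathsf{q}\ \mathsf{t}_1\cdots\mathsf{t}_k)$ for some $k \geq 0$, so the type of $\mathsf{q}$ has the form $\rho_1'\rightarrow\cdots\rightarrow\rho_k'\rightarrow\pi$ and is therefore $\geq \pi$ in the sense of Definition~\ref{stratified}. This is the bridge between the syntactic quantification "for all predicate constants of type $\geq$ that of $\mathsf{Q}$" used in clauses (2) and (4) of Definition~\ref{stratified}, and the specific ground term that the ground instantiation actually substitutes for $\mathsf{Q}$.

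With these tools, the proof reduces to verifying the local-stratification conditions clause by clause. A typical clause of $\mathsf{Gr(P)}$ has the form $(\mathsf{p}\ \mathsf{V}_1\cdots\mathsf{V}_n)\theta \leftarrow \mathsf{L}_1\theta,\ldots,\mathsf{L}_m\theta$, with head atom of stratum $\textit{stratum}(\mathsf{p})$. For each body literal I would do cases on the shape of $\mathsf{L}_i$: if $\mathsf{L}_i$ is a positive atom starting with a predicate constant $\mathsf{q}$, then $\mathsf{L}_i\theta$ still has head $\mathsf{q}$ and condition~(1) applies; if $\mathsf{L}_i$ is a positive atom starting with a predicate variable $\mathsf{Q}$, then $\mathsf{L}_i\theta$ has as head the head of $\theta(\mathsf{Q})$, whose type is $\geq$ that of $\mathsf{Q}$ by the type-counting fact, so condition~(2) applies; if $\mathsf{L}_i$ is an equality $(\mathsf{E}_1\approx \mathsf{E}_2)$, then $\mathsf{L}_i\theta$ is a constant and receives stratum $0$ by the adapted local-stratification definition, trivially $\leq \textit{stratum}(\mathsf{p})$. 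The negative case $\mathsf{L}_i = \pnot \mathsf{B}_i$ is handled in exactly the same way using conditions~(3) and~(4), yielding the required strict inequality; note that equality expressions cannot occur under $\pnot$ by Definition~\ref{def:expressions}, so no extra subcase is needed there.

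I do not expect any genuine obstacle here, since everything follows once the "head of a ground term" and its typing are pinned down. The only real point requiring care is the handling of body atoms whose head is a predicate variable: one must check that substituting a ground term does not accidentally introduce a head predicate constant to which none of the stratification conditions applies. The type-counting observation rules this out, since any such head is necessarily of type $\geq$ that of the variable, so the universally quantified clauses (2) and (4) of Definition~\ref{stratified} cover it.
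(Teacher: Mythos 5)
Your proposal is correct and follows exactly the paper's approach: the paper defines the same decomposition $S'_i = \{\mathsf{A}\in U_{\mathsf{P},o}\mid \text{the leftmost predicate constant of }\mathsf{A}\text{ belongs to }S_i\}$ and simply asserts that verifying the local-stratification conditions is ``easy to check.'' Your type-counting observation (that the head predicate constant of any ground term substituted for a predicate variable has type greater than or equal to that of the variable, so conditions~(2) and~(4) of Definition~\ref{stratified} apply) is precisely the detail that justifies the paper's omitted verification.
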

\begin{proof}
Consider a decomposition $S_1,\ldots,S_r$ of the set of predicate constants of $\mathsf{P}$ such that the
requirements of Definition~\ref{stratified} are satisfied. This defines a
decomposition $S'_1,\ldots,S'_r$ of $U_{\mathsf{P},o}$, which is also the Herbrand base of $\mathsf{Gr(P)}$, as follows:
$$S'_i = \{\mathsf{A} \in U_{\mathsf{P},o} \mid \mbox{the leftmost predicate constant of $\mathsf{A}$ belongs to $S_i$}\}$$
It is easy to check that $S'_1,\ldots,S'_r$ corresponds to a local stratification of $\mathsf{Gr(P)}$.
\end{proof}
An immediate result of the above lemma is that the model $\pmp$ can be defined for every stratified program of $\mathcal{H}$:
\begin{corollary}\label{crl:perfect_mod_exists}
Let $\mathsf{P}$ be an $\mathcal{H}$ program. If $\mathsf{P}$ is stratified, then the perfect model $\pmp$ of $\mathsf{P}$ exists and coincides with its well-founded model $\wfmp$.
\end{corollary}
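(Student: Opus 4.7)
The plan is to derive the corollary as a short consequence of Lemma~\ref{lem:strat_to_localstr} combined with well-known results from the classical (propositional) literature on stratified negation. Since $\mathsf{P}$ is stratified, Lemma~\ref{lem:strat_to_localstr} gives us immediately that the ground instantiation $\mathsf{Gr(P)}$ is a locally stratified propositional program (possibly with infinitely many clauses, but this is allowed by the framework set up above).

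Next I would invoke Theorem~\ref{thr:perf_model_char} to conclude that the perfect model $\pgp$ of $\mathsf{Gr(P)}$ exists: it is precisely the interpretation $N_\gamma$ constructed by the $\fleq$-increasing sequence defined just above that theorem. By the definition of $\pmp$ (in Section~\ref{semantics_of_language}), this existence of $\pgp$ suffices for $\pmp$ to be well defined as the Herbrand interpretation of $\mathsf{P}$ with valuation function determined by $\pgp$ on atoms of $U_{\mathsf{P},o}$.

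To obtain the coincidence $\pmp = \wfmp$, I would appeal to the classical result (from e.g.~\cite{PrzP90,GelderRS91}) that for every locally stratified propositional program the perfect model equals the well-founded model, hence $\pgp = \wgp$. Since $v_{\pmp}(\mathsf{A}) = \pgp(\mathsf{A}) = \wgp(\mathsf{A}) = v_{\wfmp}(\mathsf{A})$ for every $\mathsf{A}\in U_{\mathsf{P},o}$, and since Herbrand interpretations of a program are completely determined by their valuation functions, the two interpretations coincide.

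There is essentially no hard step here; the real work has already been packaged into Lemma~\ref{lem:strat_to_localstr} and into the cited propositional results on locally stratified programs. The only care needed is to note that $\mathsf{Gr(P)}$ may be infinite and may contain the $(\mathsf{E}_1\approx\mathsf{E}_2)$ constants, but both Theorem~\ref{thr:perf_model_char} (as stated in this appendix) and the classical equivalence between perfect and well-founded models on locally stratified programs are formulated precisely for this setting, so the invocation is immediate.
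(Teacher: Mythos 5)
Your proposal is correct and follows essentially the same route as the paper's own proof: apply Lemma~\ref{lem:strat_to_localstr} to get local stratification of $\mathsf{Gr(P)}$, conclude existence of the perfect model $\pgp$ (the paper cites \cite{Prz88} directly where you invoke the constructive characterization of Theorem~\ref{thr:perf_model_char}, an immaterial difference), and then use the classical equality of perfect and well-founded models for locally stratified programs to identify the valuation functions of $\pmp$ and $\wfmp$.
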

\begin{proof}
By Lemma~\ref{lem:strat_to_localstr}, if $\mathsf{P}$ is stratified then $\mathsf{Gr(P)}$ is locally stratified. Therefore the unique perfect model $\pgp$ of $\mathsf{Gr(P)}$ exists~\cite{Prz88} and $\pmp$ is defined. Moreover, the perfect model of a locally stratified propositional program, which is the valuation function of $\pmp$, coincides with its well-founded model $\wgp$~\cite{Prz88}, i.e. the valuation function of $\wfmp$. In other words, in this case $\pmp$ and $\wfmp$ coincide, because they have the same valuation function.
\end{proof}
\begin{oneshot}{Theorem~\ref{thr:strat_extens}}
The well-founded model $\wfmp$ of a stratified program $\mathsf{P}$ is extensional.
\end{oneshot}
\begin{proof}
By Corollary~\ref{crl:perfect_mod_exists}, if $\mathsf{P}$ is stratified then $\wfmp$ coincides with $\pmp$. Therefore, it suffices to show that $\pmp$ is extensional and for this we rely upon the constructive definition of $\pgp$ from~\cite{PrzP90} presented above.

Consider a stratification $S_1,\ldots,S_r$ of the set of predicate constants of $\mathsf{P}$. As argued in the proof of Lemma~\ref{lem:strat_to_localstr}, the following
decomposition $S'_1,\ldots,S'_r$ of $U_{\mathsf{P},o}$:
\[S'_i = \{\mathsf{A} \in U_{\mathsf{P},o} \mid \mbox{the leftmost predicate constant of $\mathsf{A}$ belongs to $S_i$}\}\]
corresponds to a local stratification of $\mathsf{Gr(P)}$. Therefore, whenever a ground atom $\mathsf{A}$ begins with a predicate constant $\mathsf{p}$, we will have $stratum(\mathsf{A})=stratum(\mathsf{p})$. Moreover, by Theorem~\ref{thr:perf_model_char}, $\pgp=N_{r}$.

Since the valuation function of $\pmp$ is $\pgp$, essentially we need to show that $\mathsf{E}\exeq[\pgp,\rho] \mathsf{E}$, for every ground expression $\mathsf{E}$ of every argument type $\rho$.
We perform an induction on the structure of $\rho$. For the base types $\iota$ and $o$ the statement holds by definition. %, as it also does for functional types $\iota^n\rightarrow\iota$.
For the induction step, we prove the statement for a predicate type $\pi=\rho_1\rightarrow\cdots\rightarrow\rho_m\rightarrow o$, assuming that it holds for all types simpler than $\pi$ (i.e., for the types $\rho_1,\ldots,\rho_m,o$ and, recursively, the types that are simpler than $\rho_1,\ldots,\rho_m$).
Let $\mathsf{A}$ be any atom of the following form: $\mathsf{A}$ is headed by a predicate constant $\mathsf{p}$ and all variables in $vars(\mathsf{A})$ are of types simpler than $\pi$. Let $\theta, \theta'$ be  ground substitutions, such that $vars(\mathsf{A}) \subseteq dom(\theta),dom(\theta')$ and $\theta(\mathsf{V}) \exeq[\pgp,\rho] \theta'(\mathsf{V})$ for any $\mathsf{V}:\rho$ in $vars(\mathsf{A})$. We claim it suffices to show the following two properties $P_1(\alpha)$ and $P_2(\alpha)$, for all finite ordinals (i.e., natural numbers) $\alpha$:
\begin{description}
\item[$P_1(\alpha)$:] If $\vgr{\mathsf{A}\theta}{N_\alpha}=\mathit{true}$ then $\vgr{\mathsf{A}\theta'}{\pgp}=\mathit{true}$.
\item[$P_2(\alpha)$:] If $\vgr{\mathsf{A}\theta}{N_{\alpha}}=\mathit{false}$ then $\vgr{\mathsf{A}\theta'}{\pgp}=\mathit{false}$.
\end{description}
To see why proving the above properties is enough to establish that $\mathsf{E}\exeq[\pgp, \pi] \mathsf{E}$, observe the following: first of all, we assumed that $\pi$ is of the form $\rho_1\rightarrow\cdots\rightarrow\rho_m\rightarrow o$, so if $\mathsf{V}_1:\rho_1, \ldots, \mathsf{V}_m:\rho_m$ are variables, then $\mathsf{E}\, \mathsf{V}_1\, \cdots\, \mathsf{V}_m$ is an atom of the form described above. As $\pgp=N_{r}$, if $\vgr{\mathsf{E}\,\theta(\mathsf{V}_1)\,\cdots\,\theta(\mathsf{V}_m)}{\pgp}=\vgr{\mathsf{E}\,\theta(\mathsf{V}_1) \,\cdots\,\theta(\mathsf{V}_m)}{N_r}=\mathit{true}$ and property $P_1(r)$ holds, then we can infer that $\vgr{\mathsf{E}\,\theta'(\mathsf{V}_1) \,\cdots\, \theta'(\mathsf{V}_m)}{\pgp}=\mathit{true}$.
Because the relations $\exeq[\pgp,\rho_i]$ are symmetric, $\theta$ and $\theta'$ are interchangeable. Therefore the same argument can be used to infer the reverse implication, i.e. $\vgr{\mathsf{E}\, \theta'(\mathsf{V}_1) \, \cdots\, \theta'(\mathsf{V}_m)}{\pgp}=\mathit{true}\Rightarrow\vgr{\mathsf{E}\, \theta(\mathsf{V}_1) \, \cdots\, \theta(\mathsf{V}_m)}{\pgp}=\mathit{true}$, thus resulting to an equivalence. If $P_2(r)$ holds, the analogous equivalence can be shown for the value $\mathit{false}$ in the same way and so it follows that $\mathsf{E}\exeq[\pgp, \pi] \mathsf{E}$. Finally, $r$ is determined by the stratification of the higher-order program and is therefore finite, so we only need to prove properties $P_1(\alpha)$ and $P_2(\alpha)$ for finite ordinals.

%Also, we have $stratum(\mathsf{A}\theta)=stratum(\mathsf{A}\theta')=stratum(\mathsf{p})$, since both atoms $\mathsf{A}\theta$ and $\mathsf{A}\theta'$ start with the predicate constant $\mathsf{p}$.
We will proceed by a second induction on $\alpha$.

\textbf{Second Induction Basis ($\alpha=0$)}
We have $N_0 = \langle\emptyset,\emptyset\rangle$. As this interpretation does not assign the value $\mathit{true}$ or the value $\mathit{false}$ to any atom, both properties $P_1(0)$ and $P_2(0)$ hold vacuously.

\textbf{Second Induction Step ($\alpha+1$)}
We first show $P_1(\alpha+1)$. We have that $N_{\alpha+1}=\langle\Psi^{\uparrow \omega}_{N_\alpha},\mathcal{B}_{\alpha+1}-\Psi^{\uparrow \omega}_{N_\alpha}\rangle$; observe that $\vgr{\mathsf{A}\theta}{\Psi^{\uparrow \omega}_{N_\alpha}}=\mathit{true}$ if and only if there exists some $n<\omega$ for which $\vgr{\mathsf{A}\theta}{\Psi^{\uparrow n}_{N_\alpha}}=\mathit{true}$. Therefore, in order to prove $P_1(\alpha+1)$, we first need to perform a third induction on $n$ and prove the following property:
\begin{description}
\item[$P'_1(\alpha+1,n)$:] If $\vgr{\mathsf{A}\theta}{\Psi_{N_\alpha}^{\uparrow n}}=\mathit{true}$ then $\vgr{\mathsf{A}\theta'}{\pgp}=\mathit{true}$.
\end{description}

\textbf{Third Induction Basis ($n=0$)}
Property $P'_1(\alpha+1,0)$ holds vacuously, since $\Psi_{M_\alpha}^{\uparrow 0}=\emptyset$, i.e. it does not assign the value $\mathit{true}$ to any atom.

\textbf{Third Induction Step ($n+1$)}
We now show property $P_1'(\alpha+1,n+1)$, assuming that $P_1'(\alpha+1,n)$ holds. If $\vgr{\mathsf{A}\theta}{\Psi_{N_\alpha}^{\uparrow (n+1)}}=\mathit{true}$, then there exists a clause $\mathsf{A}\theta\leftarrow \mathsf{L}_1,\ldots, \mathsf{L}_k$ in $\mathsf{Gr(P)}$ such that, for each $i\leq k$, either $\vgr{\mathsf{L}_i}{{N_\alpha}}=\mathit{true}$ or $\mathsf{L}_i$ is an atom and $\vgr{\mathsf{L}_i}{\Psi_{N_\alpha}^{\uparrow n}}=\mathit{true}$.
This clause is a ground instance of a clause $\mathsf{p}\,\mathsf{V}_1\,\cdots\,\mathsf{V}_m \leftarrow \mathsf{K}_1,\ldots, \mathsf{K}_k$ in the higher-order program and there exists a substitution $\theta''$, such that $(\mathsf{p}\,\mathsf{V}_1\,\cdots\,\mathsf{V}_m)\theta''=\mathsf{A}$ and, for any variable $\mathsf{V}\not\in \{\mathsf{V}_1,\ldots,\mathsf{V}_m\}$ appearing in the body of the clause, $\theta''(\mathsf{V})$ is an appropriate ground term, so that $\mathsf{L}_i = \mathsf{K}_i\theta''\theta$ for all $i\leq k$. Observe that the variables appearing in the clause $(\mathsf{p}\,\mathsf{V}_1\,\cdots\,\mathsf{V}_m)\theta'' \leftarrow \mathsf{K}_1\theta'',\ldots, \mathsf{K}_k\theta''$ are exactly the variables appearing in $\mathsf{A}$ and they are all of types simpler than $\pi$. We distinguish the following cases for each $\mathsf{K}_i\theta''$, $i\leq k$:
\begin{enumerate}
\item \emph{$\mathsf{K}_i\theta''$ is of the form $(\mathsf{E}_1 \approx \mathsf{E}_2)$:} As remarked in~\ref{app:min_mod}, an expression of the form $(\mathsf{E}_1 \approx \mathsf{E}_2)$ has the same value in any interpretation. If $\vgr{\mathsf{K}_i\theta''\theta}{N_\alpha}=\vgr{\mathsf{K}_i\theta''\theta}{\Psi_{N_\alpha}^{\uparrow n}}=\mathit{true}$, by definition we have $\mathsf{E}_1\theta = \mathsf{E}_2\theta$. Since $\mathsf{E}_1$ and $\mathsf{E}_2$ are expressions of type $\iota$, all variables in $\mathsf{E}_1$ and $\mathsf{E}_2$ are also of type $\iota$ and, because $\exeq[\pgp,\iota]$ is defined as equality, we will have $\mathsf{E}_1\theta = \mathsf{E}_1\theta'$ and $\mathsf{E}_2\theta=\mathsf{E}_2\theta'$. Therefore $\mathsf{E}_1\theta' = \mathsf{E}_2\theta'$ and $\vgr{\mathsf{K}_i\theta''\theta'}{\pgp}=\mathit{true}$ will also hold.
\item \emph{$\mathsf{K}_i\theta''$ is an atom and starts with a predicate constant:} As we observed, the variables appearing in $\mathsf{K}_i\theta''$ are of types simpler than $\pi$. Because $\mathsf{K}_i\theta''\theta$ is an atom, either $\vgr{\mathsf{L}_i}{{N_\alpha}} = \vgr{\mathsf{K}_i\theta''\theta}{N_\alpha} = \mathit{true}$ or $\vgr{\mathsf{L}_i}{\Psi_{N_\alpha}^{\uparrow n}} =\vgr{\mathsf{K}_i\theta''\theta}{\Psi_{N_\alpha}^{\uparrow n}} = \mathit{true}$ may hold. In the former case, by the second induction hypothesis we can apply property $P_1(\alpha)$ and it follows that $\pgp(\mathsf{K}_i\theta''\theta')=\mathit{true}$. Similarly, in the latter case, the same conclusion can be reached by the third induction hypothesis and property $P_1'(\alpha+1,n)$.
\item \emph{$\mathsf{K}_i\theta''$ is an atom and starts with a predicate variable:} As in the previous case, it may be $\vgr{\mathsf{L}_i}{{N_\alpha}} = \vgr{\mathsf{K}_i\theta''\theta}{N_\alpha} = \mathit{true}$ or $\vgr{\mathsf{L}_i}{\Psi_{N_\alpha}^{\uparrow n}} =\vgr{\mathsf{K}_i\theta''\theta}{\Psi_{N_\alpha}^{\uparrow n}} = \mathit{true}$. Let $\mathsf{K}_i\theta''=\mathsf{V}\,\mathsf{E}_1\,\cdots\,\mathsf{E}_{m'}$ for some $\mathsf{V}\in vars(\mathsf{A})$. Then $\mathsf{B}=\theta(\mathsf{V})\,\mathsf{E}_1\,\cdots\,\mathsf{E}_{m'}$ is an atom that begins with a predicate constant and, by $vars(\mathsf{K}_i\theta'') \subseteq vars(\mathsf{A})$, all of the variables of $\mathsf{B}$ are of types simpler than $\pi$. Hence, by the second induction hypothesis, $\mathsf{B}$ satisfies property $P_1(\alpha)$ and if $\vgr{\mathsf{K}_i\theta''\theta}{N_\alpha} =\vgr{\mathsf{B}\theta}{N_\alpha} = \mathit{true}$ then it follows that $\pgp(\mathsf{B}\theta')=\mathit{true}$~(1).
Similarly, by the third induction hypothesis, $\mathsf{B}$ also satisfies property $P'_1(\alpha+1,n)$, so if $\vgr{\mathsf{K}_i\theta''\theta}{\Psi_{N_\alpha}^{\uparrow n}} =\vgr{\mathsf{B}\theta}{\Psi_{N_\alpha}^{\uparrow n}} = \mathit{true}$, then the same conclusion, that $\pgp(\mathsf{B}\theta')=\mathit{true}$~(1), is reached again. Observe that the types of all arguments of $\theta(\mathsf{V})$, i.e. the types of $\mathsf{E}_j\theta'$ for all $j\leq m'$, are simpler than the type of  $\mathsf{V}$ and consequently, since $\mathsf{V} \in vars(\mathsf{A})$, simpler than $\pi$. For each $j\leq m'$, let $\rho_j$ be the type of $\mathsf{E}_j$ and let $\rho$ be the type of $\mathsf{V}$; by the first induction hypothesis, $\mathsf{E}_j\theta'\exeq[\pgp,\rho_j] \mathsf{E}_j\theta'$. Moreover, by assumption we have that $\theta(\mathsf{V}) \exeq[\pgp,\rho] \theta'(\mathsf{V})$. Then, by definition and by (1) $\vgr{\theta(\mathsf{V})\,\mathsf{E}_1\theta'\,\cdots\,\mathsf{E}_{m'}\theta'}{\pgp}=\vgr{\theta'(\mathsf{V})\,\mathsf{E}_1\theta'\,\cdots\,\mathsf{E}_{m'}\theta'}{\pgp}=\vgr{\mathsf{K}_i\theta''\theta'}{\pgp}=\mathit{true}$.
\item \emph{$\mathsf{K}_i\theta''$ is a negative literal and its atom starts with a predicate constant:}
Let $\mathsf{K}_i\theta''$ be of the form $\pnot \mathsf{B}$, where $\mathsf{B}$ is an atom that starts with a predicate constant. It is $\vgr{\pnot\mathsf{B}\theta}{N_{\alpha}}=\vgr{\mathsf{K}_i\theta''\theta}{N_{\alpha}}=\vgr{\mathsf{L}_i}{N_{\alpha}}= \mathit{true}$ and therefore $\vgr{\mathsf{B}\theta}{N_{\alpha}}=\mathit{false}$. Moreover, by $vars(\mathsf{K}_i\theta'') \subseteq vars(\mathsf{A})$, all the variables of $\mathsf{B}$ are of types simpler than $\pi$, so we can apply the second induction hypothesis, in particular property $P_2(\alpha)$, to $\mathsf{B}$ and conclude that $\vgr{\mathsf{B}\theta'}{\pgp}= \mathit{false}$. Then $\vgr{\pnot \mathsf{B}\theta'}{\pgp}= \vgr{\mathsf{K}_i\theta''\theta'}{\pgp}=\mathit{true}$.
\item \emph{$\mathsf{K}_i\theta''$ is a negative literal and its atom starts with a predicate variable:}
Let $\mathsf{K}_i\theta''=\pnot(\mathsf{V}\,\mathsf{E}_1\,\cdots\,\mathsf{E}_{m'})$ for some $\mathsf{V}\in vars(\mathsf{A})$. Then $\mathsf{B}=\theta(\mathsf{V})\,\mathsf{E}_1\,\cdots\,\mathsf{E}_{m'}$ is an atom that begins with a predicate constant and, by $vars(\mathsf{K}_i\theta'') \subseteq vars(\mathsf{A})$, all of the variables of $\mathsf{B}$ are of types simpler than $\pi$. Also, $\vgr{\pnot\mathsf{B}\theta}{N_{\alpha}} =\vgr{\mathsf{K}_i\theta''\theta}{N_{\alpha}}=\vgr{\mathsf{L}_i}{N_{\alpha}}=\mathit{true}$ and therefore $\vgr{\mathsf{B}\theta}{N_{\alpha}}=\mathit{false}$. Hence, by the second induction hypothesis and in particular property $P_2(\alpha)$, it follows that $\pgp(\mathsf{B}\theta')=\vgr{\theta(\mathsf{V})\,\mathsf{E}_1\theta'\,\cdots\,\mathsf{E}_{m'}\theta'}{\pgp}=\mathit{false}$~(1). Observe that the types of all arguments of $\theta(\mathsf{V})$, i.e. the types of $\mathsf{E}_j\theta'$ for all $j\leq m'$, are simpler than the type of  $\mathsf{V}$ and consequently, since $\mathsf{V} \in vars(\mathsf{A})$, simpler than $\pi$. For each $j\leq m'$, let $\rho_j$ be the type of $\mathsf{E}_j$ and let $\rho$ be the type of $\mathsf{V}$; by the first induction hypothesis, $\mathsf{E}_j\theta'\exeq[\pgp,\rho_j] \mathsf{E}_j\theta'$. Moreover, by assumption we have that $\theta(\mathsf{V}) \exeq[\pgp,\rho] \theta'(\mathsf{V})$. Then, by definition and by (1), $\vgr{\theta(\mathsf{V})\,\mathsf{E}_1\theta'\,\cdots\,\mathsf{E}_{m'}\theta'}{\pgp}=\vgr{\theta'(\mathsf{V})\,\mathsf{E}_1\theta'\,\cdots\,\mathsf{E}_{m'}\theta'}{\pgp}=\mathit{false}$. Obviously, this makes $\vgr{\pnot (\theta'(\mathsf{V})\,\mathsf{E}_1\theta'\,\cdots\,\mathsf{E}_{m'}\theta')}{\pgp}=\vgr{\mathsf{K}_i\theta''\theta'}{\pgp}=\mathit{true}$.
\end{enumerate}
We have shown that, for each $i\leq k$, $\vgr{\mathsf{K}_i\theta''\theta'}{\pgp}=\mathit{true}$. Since the clause $\mathsf{A}\theta' \leftarrow \mathsf{K}_1\theta''\theta',\ldots, \mathsf{K}_k\theta''\theta'$ is in $\mathsf{Gr(P)}$ and $\pgp$ is a model of $\mathsf{Gr(P)}$, we can conclude that $\vgr{\mathsf{A}\theta'}{\pgp}=\mathit{true}$.

This concludes the proof for $P'_1(\alpha+1,n)$. Notice that property $P'_1(\alpha+1,n)$ immediately implies property $P_1(\alpha+1)$: as mentioned before, $\vgr{\mathsf{A}\theta}{N_{\alpha+1}}=\vgr{\mathsf{A}\theta}{\Psi^{\uparrow \omega}_{N_\alpha}}=\mathit{true}$ if and only if there exists some $n<\omega$ for which $\vgr{\mathsf{A}\theta}{\Psi^{\uparrow n}_{N_\alpha}}=\mathit{true}$ and then $\vgr{\mathsf{A}\theta'}{\pgp}=\mathit{true}$ follows from property $P'_1(\alpha+1,n)$. It remains to prove property $P_2(\alpha+1)$. Observe that the atoms $\mathsf{A}\theta$ and $\mathsf{A}\theta'$ both start with the same predicate constant $\mathsf{p}$ and recall that we have chosen a local stratification for $\mathsf{Gr(P)}$, such that $stratum(\mathsf{A}\theta)=stratum(\mathsf{A}\theta')=stratum(\mathsf{p})$. Moreover, we have that $N_{\alpha+1}=\langle\Psi^{\uparrow \omega}_{N_\alpha},\mathcal{B}_{\alpha+1}-\Psi^{\uparrow \omega}_{N_\alpha}\rangle$, so if $\vgr{\mathsf{A}\theta}{N_{\alpha+1}}=\mathit{false}$, it follows that $\mathsf{A}\theta\in\mathcal{B}_{\alpha+1}$. Because $stratum(\mathsf{A}\theta)=stratum(\mathsf{A}\theta')$, it must also be $\mathsf{A}\theta'\in\mathcal{B}_{\alpha+1}$,  which implies that $\vgr{\mathsf{A}\theta'}{N_{\alpha+1}}$ can be either $\mathit{true}$ (if $\mathsf{A}\theta'\in \Psi^{\uparrow \omega}_{N_\alpha}$) or $\mathit{false}$ (if $\mathsf{A}\theta'\not\in \Psi^{\uparrow \omega}_{N_\alpha}$), but not $\unk$. For the sake of contradiction, assume that $\vgr{\mathsf{A}\theta'}{N_{\alpha+1}}=\mathit{true}$. As the relations $\exeq[\pgp,\rho_i]$ are symmetric, $\theta$ and $\theta'$ are interchangeable, so property $P_1(\alpha+1)$ applies and yields $\vgr{\mathsf{A}\theta}{\pgp}=\mathit{true}$. Because (by Theorem~\ref{thr:perf_model_char}) $N_{\alpha+1}\fleq\pgp$, this contradicts our initial assumption that $\vgr{\mathsf{A}\theta}{N_{\alpha+1}}=\mathit{false}$. Therefore, it must be $\vgr{\mathsf{A}\theta'}{N_{\alpha+1}}=\mathit{false}$ and so, again by $N_{\alpha+1}\fleq\pgp$, it follows that $\vgr{\mathsf{A}\theta'}{\pgp}=\mathit{false}$.
\end{proof}
\end{document}